\title{The Localized Union-of-Balls Bifiltration}
\titlerunning{The Localized Union-of-Balls Bifiltration}
\author{Michael Kerber}{Institute of Geometry, Graz University of Technology, Austria}{kerber@tugraz.at}{ https://orcid.org/0000-0002-8030-9299}{Austrian Science Fund (FWF) grant P 33765-N}
\author{Matthias S\"ols}{Institute of Geometry, Graz University of Technology, Austria}{soels@tugraz.at}{}{Austrian Science Fund (FWF): grant W1230 }
\authorrunning{Kerber, S\"ols}%TODO mandatory. First: Use abbreviated first/middle names. Second (only in severe cases): Use first author plus 'et al.'
\keywords{Topological Data Analysis, Multi-Parameter Persistence, Persistent Local Homology}%TODO mandatory; please add comma-separated list of keywords
\newcommand{\R}{\mathbb{R}}
\newcommand{\N}{\mathbb{N}}
\newcommand{\nrv}{\mathrm{Nrv}}
\newcommand{\Vor}{\mathrm{Vor}}
\newcommand{\eps}{\varepsilon}
\newcommand{\rb}{(L,L^{\mathfrak{C}}q)}
\newcommand{\complex}{K}
\newcommand{\openball}[2]{B^o_{#1}{(#2)}}
\newcommand{\cover}{\mathcal{U}}
\newcommand{\ignore}[1]{}
\begin{document}

\maketitle

\begin{abstract} 
We propose an extension of the classical union-of-balls filtration
of persistent homology: fixing a point $q$, we focus our attention
to a ball centered at $q$ whose radius is controlled by a second
scale parameter. We discuss an absolute variant, where the union
is just restricted to the $q$-ball, and a relative variant where 
the homology of the $q$-ball relative to its boundary is considered.
Interestingly, these natural constructions lead to bifiltered
simplicial complexes which are not $k$-critical for any finite $k$.
Nevertheless, we demonstrate that these bifiltrations can be computed
exactly and efficiently, and we provide a prototypical implementation
using the CGAL library. 
We also argue that 
some of the recent algorithmic advances for $2$-parameter persistence
(which usually assume $k$-criticality for some finite $k$)
carry over to the $\infty$-critical case.
\end{abstract}

\section{Introduction}
In the past years, the theory of \emph{multi-parameter persistent homology}
has gained increasing popularity. This theory extends the theory 
of (single-parameter) persistent homology by filtering a data set
with several scale parameters and observing how the topological properties
change when altering the ensemble of parameters. 
Most standard examples
%, such as the density-Rips bifiltration, 
%the degree-Rips bifiltration, and the multi-cover bifiltration,
define two scale parameters, where the first one is based on the distance
within the data set and the second one on its local density.
The motivation for that choice is an increased robustness against outliers
in the data set.

\subparagraph{Localized bifiltrations.}
We suggest a different type of bifiltration where the second parameter controls
the \emph{locality} of the data. Let $P$ be a finite set of data points in Euclidean space
$\R^d$ and $q\in\R^d$ a further point that we call the \emph{center}.
For two real values $s,r\geq 0$, we define
\[
L_{s,r}=\left(\bigcup_{p\in P} B_{s}(p)\right)\cap B_{r}(q)
\]
where $B_\alpha(x)$ is the ball of radius $\sqrt{\alpha}$ centered at $x$ (taking the square root
is not standard, but will be convenient later). In other words,
we consider the union of balls around the data points (as in many applications
of persistent homology), but we limit attention to a neighborhood around the center.
It is immediate that $L_{s,r}\subseteq L_{s',r'}$ for $s\leq s'$ and $r\leq r'$
so $L:=(L_{s,r})_{s,r\geq 0}$ is a nested sequence of spaces. 
We define the collection of spaces $L$ to be the \emph{absolute localized bifiltration}, see Figure~\ref{fig:initial_example} for an illustration.
The goal of this paper is to compute a combinatorial representation
of this bifiltration: a bifiltration of simplicial complexes
which is homotopy equivalent to the absolute localized bifiltration
at every choice $(s,r)$ of parameters.

Alternatively, we consider the variant where all points of $L_{s,r}$
on the boundary of $B_r(q)$ are identified (see Figure \ref{fig:absolute_vs_relative}). This version gives rise to a bifiltration that we call
the \emph{relative localized bifiltration}.
This sometimes reveals more local information around $q$
(as in Figure \ref{fig:absolute_vs_relative})
and is more frequently used in applications (see related work below).
Again, we are asking for an equivalent simplicial description.

\begin{figure}[ht]
\centering
\includegraphics[width=12cm]{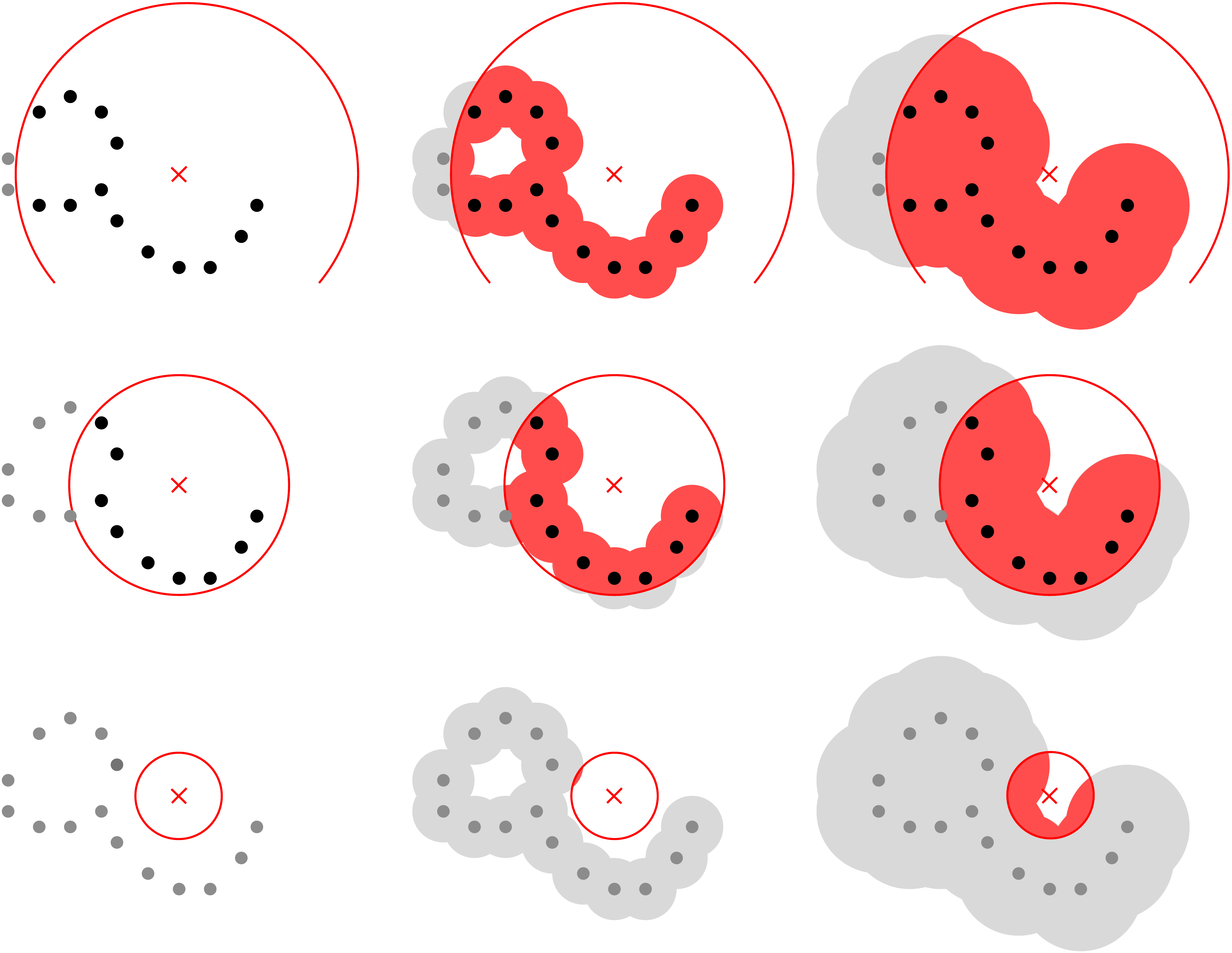}
\caption{Illustration of $L_{s,r}$. The radius $s$ controls the radius around the points in $P$ (black dots)
and grows in the horizontal direction. The radius $r$ of the center point (red cross) grows in vertical direction.
The sets $L_{s,r}$ are marked in dark, red color.}
\label{fig:initial_example}
\end{figure}

\begin{figure}[ht]
\centering
\includegraphics[width=8cm]{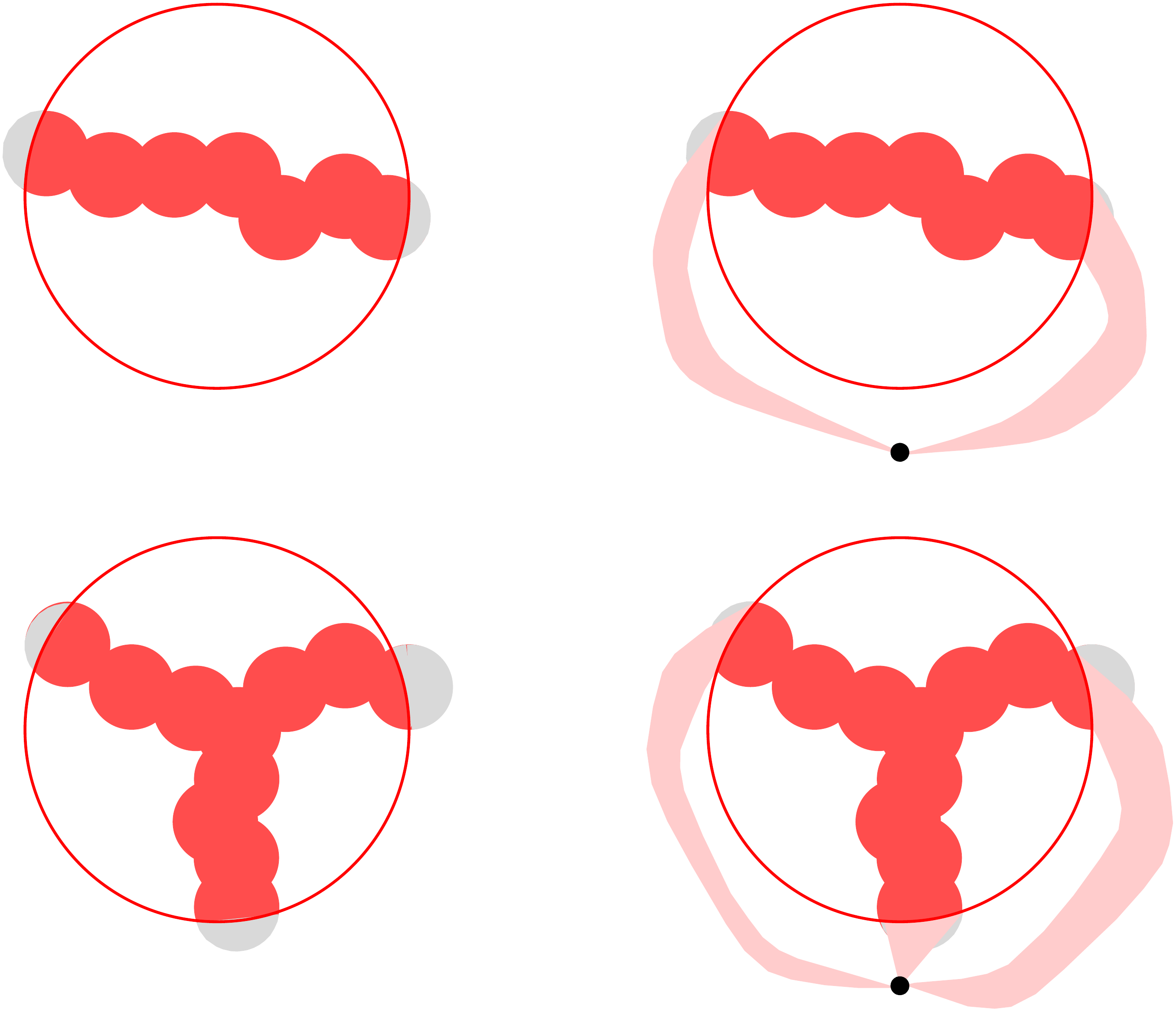}
\caption{Left: The two examples of $L_{s,r}$ are both contractible and therefore
cannot be distinguished further by homological methods. Right: The quotient space of $L_{s,r}$ relative to $L_{s,r}\cap \partial B_r(q)$ 
can be visualized by coning all points on the boundary of the ball with a (virtual) vertex,
drawn as a black dot here.
In the upper example, the resulting space has one hole, whereas the lower example has two. Therefore the homology of the spaces changes and allows for distinction.}
\label{fig:absolute_vs_relative}
\end{figure}

An interesting feature of these localized bifiltrations is that 
topological changes arise along curves in the two-dimensional parameter space spanned by $s$ and $r$. 
The perhaps simplest example is obtained by setting $d=1$, $q=0$ and $P=\{1\}$. Then, $L_{s,r}\neq\emptyset$
if and only if $r+s\geq 1$. Hence, the empty and non-empty regions
are separated by a line in the parameter space.
This implies that any equivalent simplicial bifiltration is 
\emph{$\infty$-critical}, meaning that there
is no integer $k$ for which it is $k$-critical. See Figure \ref{fig:criticality} for an illustration of $k$-criticality.

\begin{figure}[ht]
	\centering
	\includegraphics[width=3.5cm]{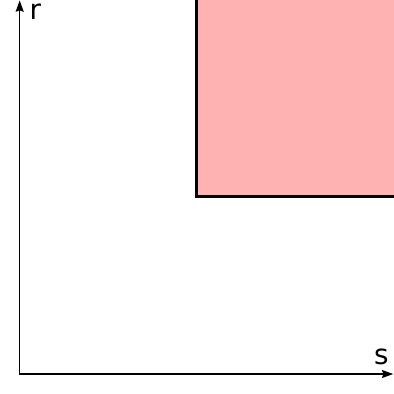}
	\hspace{1cm}
	\includegraphics[width=3.5 cm]{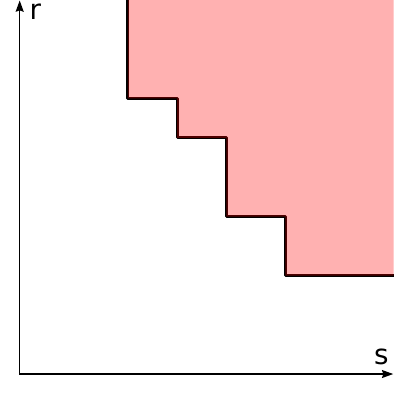}
	\hspace{1cm}
	\includegraphics[width=3.5 cm]{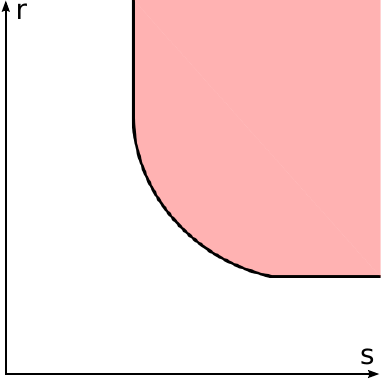}
	\caption{The active region and entry curve of a fixed simplex in the parameter space of a simplicial bifiltration, if it is $1$-critical (left), $4$-critical (middle) and $\infty$-critical (right). For a more formal definition of $k$-criticality, see, for instance,~\cite{cfklw-kernel}.}
	\label{fig:criticality}
\end{figure}

Given that $\infty$-critical simplicial bifiltrations are obtained from such a simple construction, we pose the question whether such bifiltrations allow for an efficient algorithmic treatment.
In this paper, we focus on the first step, the \emph{generation} of such bifiltrations.
This requires to compute, for every simplex its \emph{entry curve}, that is,
the boundary between the region of the parameter space where the simplex
is present and where it is not present (see Figure~\ref{fig:criticality}).
A natural idea might be to reduce to the $k$-critical case, approximating the entry curve by
a staircase with $k$ steps (a sequence of horizontal and vertical segments).
However, to ensure an accurate approximation,
the value of $k$ might be quite high which complicates the algorithmic treatment and introduces
another parameter to the problem.
Also, resorting to an approximation is unsatisfying, especially for the generation step which
is only the first step in the computational pipeline: while a discretization might suffice
for many tasks, it restricts the possibilities of subsequent steps. We therefore advocate
the computation of an \emph{exact} representation instead.

\subparagraph{Contributions.}
We give algorithms to compute absolute and relative localized bifiltrations exactly
and efficiently. The entry time of every simplex into the bifiltration is described by a curve in the 2-dimensional
parameter space that consists of line segments and parabolic arcs.

In the absolute case, a simplicial bifiltration is obtained 
using alpha complexes (also known as Delaunay complexes) and the Persistent Nerve Theorem. 
To determine the entry curve
of a (Delaunay) simplex, we solve a convex minimization problem on the dual Voronoi polytope
parameterized in $s$. We show that the solutions yield a polygonal chain
within the Voronoi polytope, and every line segment translates to one arc
of the entry curve in the parameter space. We also describe an efficient
algorithm to compute the entry curves of all simplices
in amortized constant time per simplex.

In the relative case, we use a variant of the Persistent Nerve Theorem for pairs.
However, the Voronoi partition does not satisfy the prerequisites of this Nerve theorem;
we show how
to subdivide the Voronoi cells for planar inputs to overcome this problem.
The entry curves of simplices are determined
by the same convex optimization problem as in the absolute case, but now asking
for a maximal solution, and can be treated with similar methods.

We provide a prototypical implementation\footnote{\url{https://bitbucket.org/mkerber/demo_absolute_2d/src/master/}} 
of the absolute case in the plane,
based on the \textsc{Cgal} library. This software allows us to visualize
the entry curves of localized bifiltrations and serves as a starting point
for subsequent algorithmic studies of $\infty$-critical bifiltrations.
We argue that an algorithmic treatment is in reach by showing that barcode
templates of such bifiltrations are computable with the same strategy 
as in the $1$-critical case.

\subparagraph{Related work.}
Applying the homology functor with field coefficients to absolute (relative) localized bifiltrations leads to persistence modules which we call absolute (relative) localized persistence modules. 
Horizontal and vertical slices of the relative localized persistence module are known as \emph{persistent local homology (PLH) modules} in literature, see the survey \cite{fw-plhsurvey}. 

A persistent version of local homology was first considered by Bendich et al. \cite{bceh-lhfromstratified} to infer the local homology of a stratified space given by a point cloud. Their PLH modules are defined via extended persistence diagrams \cite{cseh-extending} and the inherent two parameters ($s$ and $r$ in our notation) are taken into account through vineyards \cite{csem-vineyards}.
The study of stratified spaces with the help of PLH is continued in \cite{bwm-lhtransfer_strat}, where points of stratified spaces are clustered into same strata. In \cite{bceh-lhfromstratified} and \cite{bwm-lhtransfer_strat}, PLH modules are computed with modified alpha complexes. Skraba and Wang \cite{sw-approx} define two variants of PLH and show how both of them can be computed via approximations by Vietoris-Rips complexes. A further application is given by Ahmed, Fasy and Wenk who define a PLH based distance on graphs used for road network comparison \cite{afw-roadmapdistance}. 

The work mentioned above relies on relative versions of PLH. A persistence module similar to a horizontal slice of an absolute localized persistence module is used by Stolz \cite{s-landmark} for outlier robust landmark selection. Von Rohrscheidt and Rieck \cite{rr-toast} consider (samples of) tri-persistence modules to measure how far a given neighborhood of a point is from being Euclidean to obtain the "manifoldness" of point clouds. These tri-persistence modules are obtained by removing the open ball $B^o_t(q)$ from $L_{s,r}$. PLH modules of a filtration $(L_{s,r}\cap \partial B_r(q))_{s\geq 0}$ in combination with multi-scale local principal component analysis are used in \cite{bghirn-mlpca_plh} to extract relevant features for machine learning from a data set. Other applications of variants of persistent local (co)homology include \cite{dfw-dimdet}, \cite{wspvj-branching} and \cite{wbb-activationlandscapes}. 

Our work suggests bifiltrations for multi-parameter persistence. Other natural constructions of bifiltrations resulting from point-set inputs
are the density-Rips bifiltration~\cite{cz-multi}, the degree-Rips bifiltration~\cite{lw-rivet,rs-stable,rolle-degree}
and the multi-cover filtration~\cite{eo-multi,cklo-computing}; 
see~\cite{bl-stability} for a comparison of these approaches in terms of stability properties.
In all these approaches, the second scale parameter models the density of a point,
which is different from our approach where it rather models the locality with respect to a point $q$.

Computational questions in multi-parameter persistence have received a lot of attention recently,
including algorithms for visualization~\cite{lw-rivet}, decomposition~\cite{dx-generalized,boo-signed},
compression~\cite{lw-computing,fkr-compression,akp-filtration} and distances~\cite{klo-exact,kn-efficient,bk-asymptotic}.
In all aforementioned approaches, the input is assumed to be a simplicial bifiltration that is at least $k$-critical
(and usually even $1$-critical), so none of these algorithms is readily applicable to the filtrations
computed in this work.

The $\infty$-critical bifiltrations appearing in our work yield persistence modules that fulfill the tameness conditions of Miller\cite{m-homalgmodposets}. In \cite{m-homalgmodposets} the foundations to vastly generalize the aforementioned $k$-critical setting are laid. However, it is of rather theoretical nature and in contrast, we propose an inital setup for a concrete algorithmic treatment which might be followed along the suggested route in \cite{m-datastructures} (see Section $20$).

\section{The absolute case}
\label{sec:absolute}
We need the following concepts to formally state our problem:
For $s,r\in\R$, we write $(s,r)\leq (s',r')$ if $s\leq s'$ and $r\leq r'$.
A \emph{bifiltration} is a collection of topological spaces $X:=(X_{s,r})_{s,r\geq 0}$
such that $X_{s,r}\subseteq X_{s',r'}$ whenever $(s,r)\leq (s',r')$.
A bifiltration is \emph{finite simplicial} if each $X_{s,r}$ is a subcomplex
of some simplicial complex $\complex$.

A \emph{map of bifiltrations} $\phi:X\to Y$ 
is a collection of continuous maps $(\phi_{s,r}:X_{s,r}\to Y_{s,r})_{s,r\geq 0}$
that commute with the inclusion maps of $X$ and $Y$. Two bifiltrations are \emph{equivalent}
if there is a map $\phi:X\to Y$ such that each $\phi_{s,r}$ gives a homotopy equivalence (i.e., there exists a map $\psi:Y\to X$
with each $\psi_{s,r}$ being a homotopy inverse to $\phi_{s,r}$).

For a finite point set $P$ in $\R^d$, called \emph{sites} from now on, and a center point $q\in\R^d$ (not necessarily a site), 
we consider the bifiltration $L$ defined by
\[L_{s,r}=\left(\bigcup_{p\in P} B_s(p)\right)\cap B_r(q)\]
with $B_s(p)$ the set of points in distance at most $\sqrt{s}$ from $p$.
Our goal is to compute a finite simplicial bifiltration that is equivalent to $L$.

\subparagraph{Localized alpha complexes.}
The filtration $L_{s,\infty}$ is the union-of-balls filtration,
one of the standard filtration types in persistent homology (with one parameter).
It is also well-known that \emph{alpha complexes}~\cite{eh-computational}
provide a practically feasible way of computing an equivalent simplicial representation
(at least if $d$ is small). We summarize this technique next; the only difference
is that we localize the alpha complexes with respect to $B_r(q)$, which introduces
a second parameter but results in no theoretical problem:

For a site $p$, its \emph{Voronoi region} is the set of points
in $\R^d$ for which $p$ is a closest site:
\[\Vor(p):=\{x\in\R^d\mid \|x-p\|\leq \|x-p'\|\:\forall p'\in P\}.\]
Every $\Vor(p)$ is closed and convex. The \emph{restricted cover} $\cover_{s,r}:=\{U_{s,r}(p)\mid p\in P\}$ is given by
\[U_{s,r}(p):=B_s(p)\cap B_r(q)\cap\Vor(p).\]
For every $s,r\geq 0$, we have that $\bigcup_{p\in P} U_{s,r}(p)=L_{s,r}$. The \emph{localized alpha complex}
is the \emph{nerve} of $\cover_{s,r}$, that is, 
the abstract simplicial complex that encodes the intersection pattern of the restricted cover elements:
\[A_{s,r}:=\nrv\,\cover_{s,r}= \{\{p_0,\ldots,p_k\}\subseteq P\mid U_{s,r}(p_0)\cap\ldots\cap U_{s,r}(p_k)\neq\emptyset\}.\]
See Figure~\ref{fig:alpha_complex} for an illustration.
All $U_{s,r}(p)$ are closed and convex, and $U_{s,r}(p)\subseteq U_{s',r'}(p)$ for $(s,r)\leq (s',r')$.
With these conditions, the \emph{Persistent Nerve Theorem}~\cite{co-towards,bkrr-unified} ensures that
there is a homotopy equivalence between $L_{s,r}$ and $A_{s,r}$ for every $s,r\geq 0$, and moreover, these
homotopy equivalences commute with the inclusion maps of $L_{s,r}$ and $A_{s,r}$.
Hence, the bifiltrations are equivalent. 

\begin{figure}[ht]
\centering
\includegraphics[width=5cm]{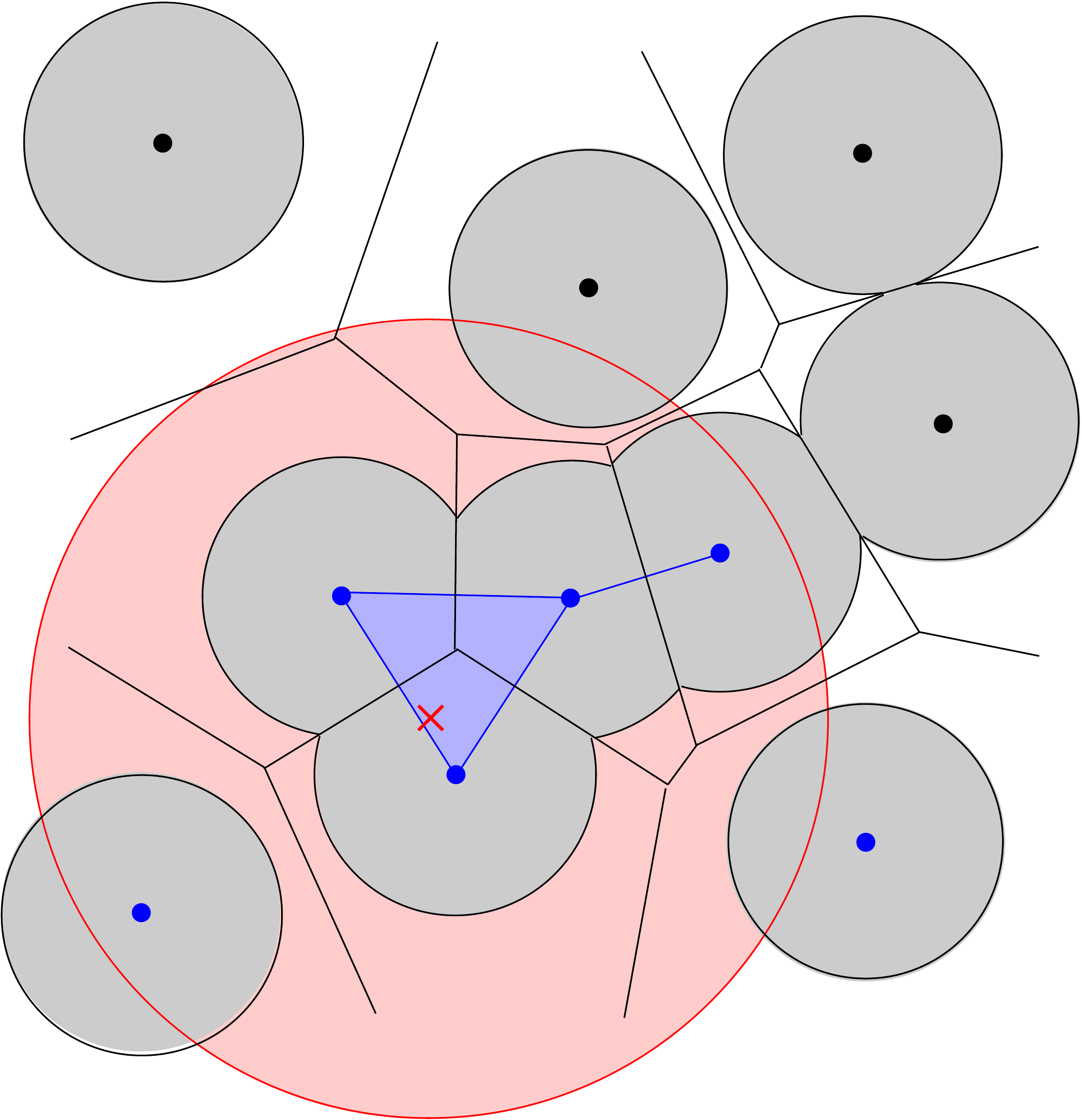}
\hspace{1cm}
\includegraphics[width=5cm]{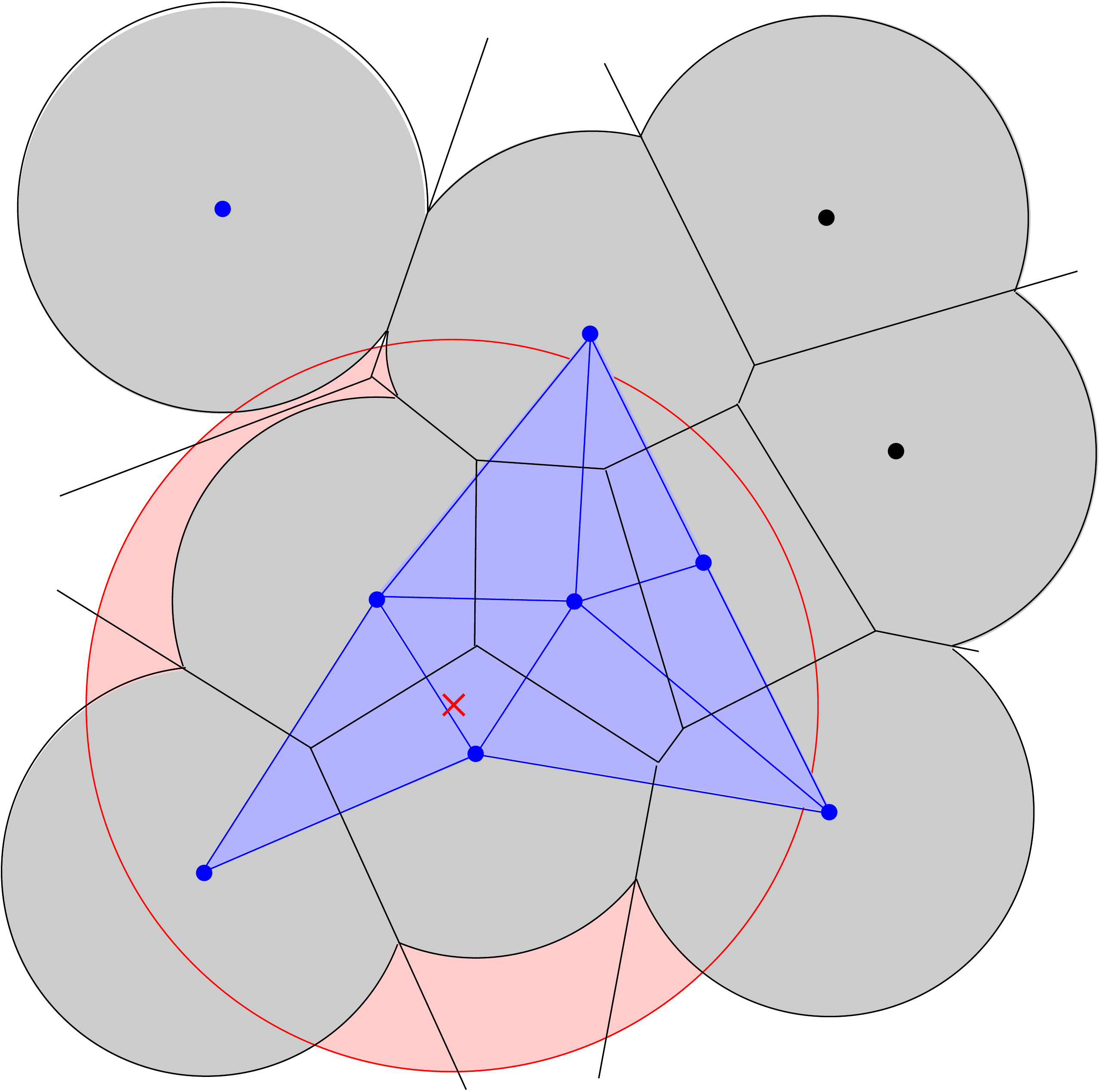}
\caption{The Voronoi regions, the restricted cover $U_{s,r}$ (in gray) and the localized alpha complex $A_{s,r}$
for two different choices of $(s,r)$. On the right, the upper blue vertices are indeed not connected although the
gray balls intersect, because they do not intersect within the (red) center circle.}
\label{fig:alpha_complex}
\end{figure}

Assuming that the sites are in generic position, that is, not more than $k+2$ sites lie on a common $k$-dimensional
sphere for $1\leq k\leq d-1$, the complexes $A_{s,r}$ are subcomplexes of the \emph{Delaunay triangulation} of $P$,
whose size is known to be at most $O(n^{\lceil d/2\rceil})$~\cite{akd-voronoi} and which can be computed efficiently, especially
in dimensions $2$ and $3$\cite{cgal:2dtriang,cgal:3dtriang}.

\subparagraph{Entry curves.}
We further specify what it means to ``compute'' the finite simplicial bifiltration $A$:
computing the Delaunay triangulation of $P$ yields all simplices belonging to $A$.
For each simplex $\sigma\in A$, we want to compute an explicit representation of
its \emph{active region}, defined as
\[R_\sigma:=\{(s,t)\in\R^2\mid \sigma\in A_{s,r}\}.\]
The active region is closed under $\leq$ in $\R^2$, meaning that
if $(x,y)\in R_\sigma$, the whole upper-right quadrant anchored at $(x,y)$ also belongs to $R_\sigma$.
We call the boundary of the active region the \emph{entry curve} of $\sigma$. See Figure \ref{fig:curves}
for an illustration of a family of entry curves.

\begin{figure}[ht]
	\centering
	\includegraphics[width=8cm]{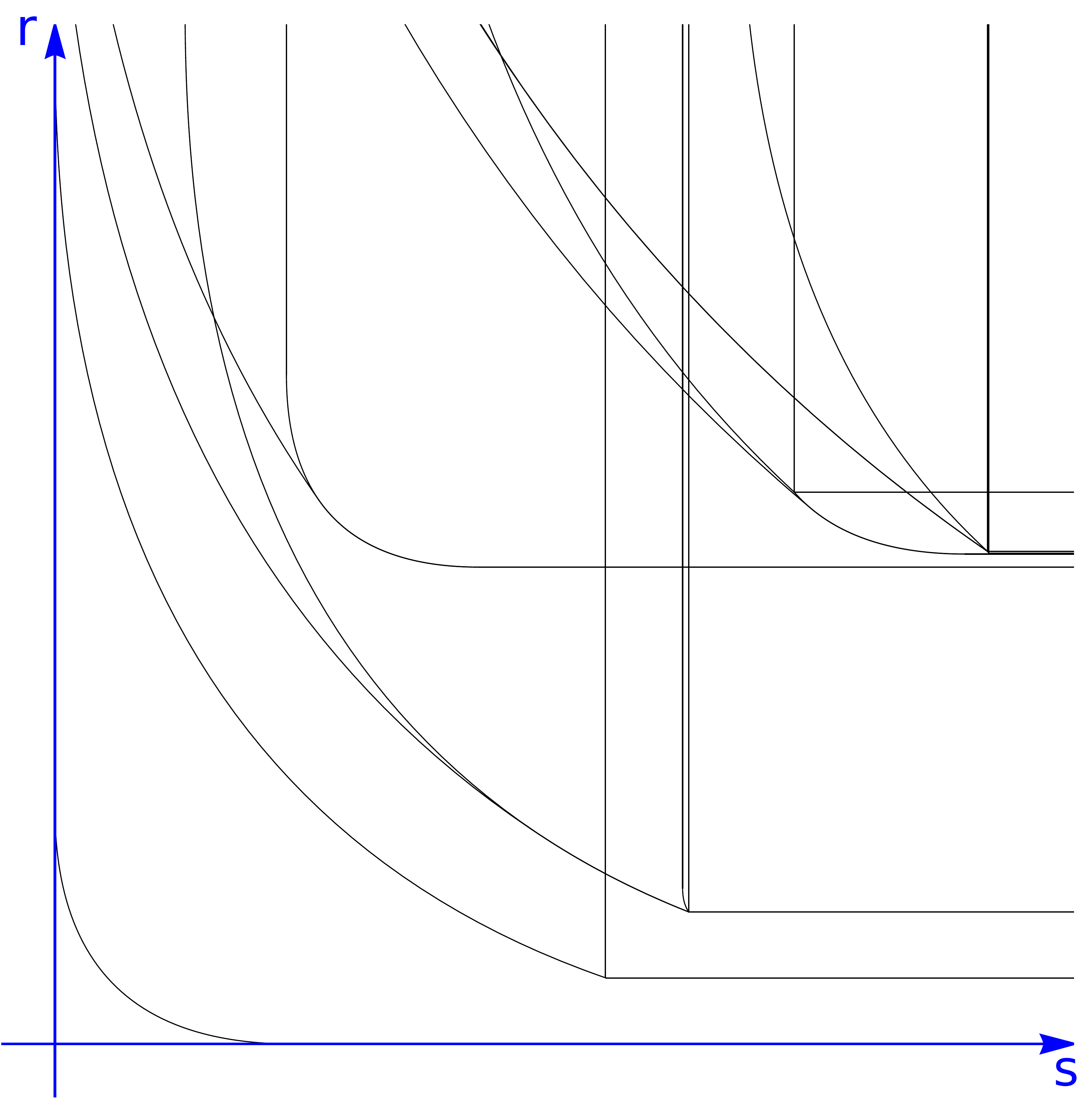}
	\caption{Section of entry curves of a planar absolute localized bifiltration on $25$ input points.}
	\label{fig:curves}
\end{figure}

To understand the structure of the entry curve, we define for $\sigma=(p_0,\ldots,p_k)$
the polytope $V_\sigma:=\Vor(p_0)\cap\ldots\cap\Vor(p_k)$. Then, letting p be some point of $\sigma$, we have that
$\sigma\in A_{s,r}$ if and only if $V_\sigma \cap B_s(p)\cap B_r(q)\neq\emptyset$.
Since $V_\sigma$ is convex, there is a unique point $\hat{p}\in V_\sigma$ as well as $\hat{q}\in V_\sigma$ with minimal distance to $p$ as well as $q$.
We write $s_0:=\|p-\hat{p}\|^2$, $s_1:=\|p-\hat{q}\|^2$
and $r_1:=\|q-\hat{q}\|^2$.
\begin{lemma}
The active region $R_\sigma$ lies in the half-plane $s\geq s_0$.
Moreover, restricted to the half-plane $s\geq s_1$, $R_\sigma$
is bounded by the line $r=r_1$, that is, the area above that line
is in $R_\sigma$ and the area below the line is not.
\end{lemma}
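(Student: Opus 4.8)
The plan is to characterize membership $\sigma\in A_{s,r}$ purely in terms of the two convex-distance functions on the polytope $V_\sigma$, and then read off both claims from elementary properties of projections onto convex sets. Recall that $\sigma\in A_{s,r}$ iff $V_\sigma\cap B_s(p)\cap B_r(q)\neq\emptyset$, i.e.\ iff there is a point $x\in V_\sigma$ with $\|x-p\|^2\leq s$ and $\|x-q\|^2\leq r$. Equivalently, writing $f(x):=\|x-p\|^2$ and $g(x):=\|x-q\|^2$, the pair $(s,r)$ lies in $R_\sigma$ iff the sublevel set $\{x\in V_\sigma\mid f(x)\leq s,\ g(x)\leq r\}$ is nonempty. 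Since $\hat p$ and $\hat q$ are the (unique) minimizers of $f$ and $g$ over the closed convex set $V_\sigma$, we have $\min_{V_\sigma} f = s_0$ and $\min_{V_\sigma} g = r_1$, with $g(\hat p)=s_1$ by definition of $s_1$. (Note $s_1=\|p-\hat q\|^2$ is the $f$-value at $\hat q$, not a $g$-value; I will use it in the second part.)

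First claim: $R_\sigma\subseteq\{s\geq s_0\}$. If $(s,r)\in R_\sigma$ there is $x\in V_\sigma$ with $\|x-p\|^2\leq s$; but $\|x-p\|^2\geq\min_{V_\sigma}f=s_0$ since $x\in V_\sigma$, so $s\geq s_0$. This is immediate from the definition of $\hat p$.

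Second claim: on the strip $s\geq s_1$, the boundary of $R_\sigma$ is exactly the horizontal line $r=r_1$. For the ``above the line is in $R_\sigma$'' direction, fix $s\geq s_1$ and $r\geq r_1$ and take $x=\hat q$: then $g(\hat q)=\|\hat q-q\|^2=r_1\leq r$, and $f(\hat q)=\|\hat q-p\|^2=s_1\leq s$, so $\hat q\in V_\sigma\cap B_s(p)\cap B_r(q)$ and hence $\sigma\in A_{s,r}$. For the ``below the line is not in $R_\sigma$'' direction, suppose $r<r_1$; then for every $x\in V_\sigma$ we have $\|x-q\|^2\geq\min_{V_\sigma}g=r_1>r$, so $B_r(q)$ misses $V_\sigma$ entirely and $\sigma\notin A_{s,r}$ regardless of $s$. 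Combining the two directions gives that within $\{s\geq s_1\}$ the active region is precisely $\{r\geq r_1\}$, so its boundary there is the line $r=r_1$, as asserted.

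There is essentially no hard step here; the only thing to be careful about is the bookkeeping of which of $s_0,s_1,r_1$ is an $f$-value and which is a $g$-value, and the observation that on the strip $s\geq s_1$ the minimizer $\hat q$ of $g$ is \emph{simultaneously} feasible for the $f$-constraint, which is exactly what makes the horizontal line $r=r_1$ (rather than some curve) the boundary there. The geometric content — that for $s<s_1$ the $f$-constraint genuinely interacts with the $g$-constraint and bends the entry curve — is precisely what is \emph{not} covered by this lemma and is handled in the subsequent analysis via the convex minimization problem on $V_\sigma$ parameterized by $s$.
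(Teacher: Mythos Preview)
Your proof is correct and follows essentially the same argument as the paper: both use that $\hat p$ minimizes $\|\cdot-p\|^2$ on $V_\sigma$ to get $R_\sigma\subseteq\{s\geq s_0\}$, that $\hat q$ minimizes $\|\cdot-q\|^2$ to get $R_\sigma\subseteq\{r\geq r_1\}$, and that for $s\geq s_1$ the point $\hat q$ itself witnesses nonemptiness of $V_\sigma\cap B_s(p)\cap B_r(q)$ whenever $r\geq r_1$. One small slip: you write ``$g(\hat p)=s_1$'' in the setup, but (as your own parenthetical immediately clarifies and as you correctly use later) it is $f(\hat q)=s_1$; this typo does not affect the argument.
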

\begin{proof}
The first part follows because by definition, for $s<s_0$, the intersection $V_\sigma\cap B_s(p)$
is empty. Likewise, for $r<r_1$, the intersection $V_\sigma\cap B_r(q)$ is empty, so $R_\sigma$
is contained in the half-plane $r\geq r_1$.
Moreover, for $s\geq s_1$, the point $\hat{q}$ lies in $V_\sigma\cap B_s(p)$, so the intersection
$V_\sigma\cap B_s(p)\cap B_r(q)$ is non-empty if and only if $\hat{q}\in B_r(q)$, which
is equivalent to $r\geq r_1$.
\end{proof}
It remains to compute the entry curve in the $s$-range $[s_0,s_1]$.
For that, we want to compute for each such $s$, what is the minimal $r$-value for which
$V_\sigma \cap B_s(p)\cap B_r(q)\neq\emptyset$.
This minimal $r$-value, in turn, is simply the distance of $q$ to the set $V_\sigma\cap B_s(p)$.
We will study this geometric problem in the next subsection; the solution will give us a
parameterization of the boundary of the active region $R_\sigma$ by line segments and parabolic arcs.
\subparagraph{Minimizing paths.}
Slightly generalizing the setup of the previous paragraph, 
let $p,q\in\R^d$, and let $V$ be a closed convex polytope in $\R^d$, that is, the intersection
of finitely many closed half-spaces in $\R^d$. Let $\hat{p},\hat{q}$ be the points in $V$ with minimal distance to $p$
and $q$, respectively. Note that $p=\hat{p}$ is possible if $p\in V$, and $\hat{p}\in\partial V$ otherwise. This holds likewise for $q$. We set $s_0:=\|p-\hat{p}\|^2$ and $s_1:=\|p-\hat{q}\|^2$. For any $s\in [s_0,s_1]$,
the intersection $V\cap B_s(p)$ is not empty, and we let $\gamma_s$ denote the point in that intersection
that is closest to $q$. See Figure~\ref{fig:gamma_illustration} for an illustration of the case $d=2$.
The proofs of the next two statements are elementary and only exploit convexity of $V$ and that consequently,
the distance function to $q$ restricted to $V$ has only one local minimum.
The proofs are in Appendix~\ref{app:missing_proofs}.
\begin{figure}[ht]
\centering
\includegraphics[width=10cm]{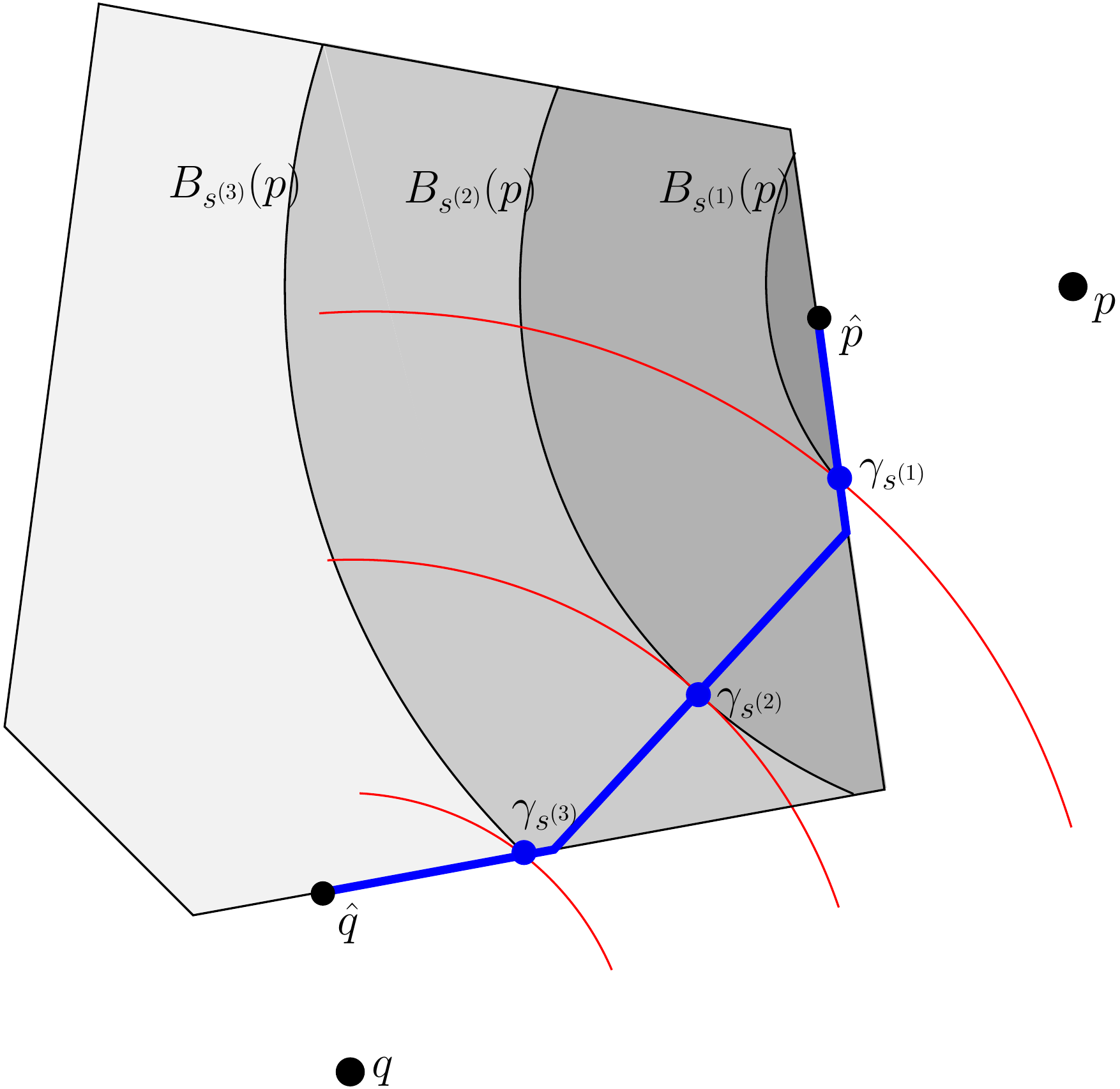}
\caption{For three radii $s^{(1)}<s^{(2)}<s^{(3)}$, the sets $V\cap B_{s_i}(p)$ and the corresponding points $\gamma_{s_1},\gamma_{s_2}, \gamma_{s_3}$ are illustrated. In fact, all
points $\gamma_s$ lie on the blue curve from $\hat{p}$ to $\hat{q}$.
The red arcs are arcs of a circle centered at q and indictate that $\gamma_{s_i}$ is indeed the minimizing point for $s_i$.}
\label{fig:gamma_illustration}
\end{figure}
\begin{lemma}
\label{lem:gamma_monotone}
We have that $\|p-\gamma_s\|^2=s$. In particular, the function $\|p-\gamma_s\|^2$ is
strictly increasing for $s\in [s_0,s_1]$. 
%Moreover, the function $\|q-\gamma_s\|^2$ is strictly decreasing for $s\in [s_0,s_1]$.
\end{lemma}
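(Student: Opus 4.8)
The plan is to reduce everything to the single identity $\|p-\gamma_s\|^2=s$: once this is established, the ``in particular'' clause is immediate, since $s\mapsto s$ is strictly increasing on $[s_0,s_1]$. Geometrically, $\|p-\gamma_s\|^2=s$ asserts that $\gamma_s$ lies exactly on the bounding sphere $\partial B_s(p)$, i.e., the radius constraint cutting out $V\cap B_s(p)$ is active at the $q$-closest point of that set. Recall that $\gamma_s$ is well defined (unique) because $V\cap B_s(p)$ is convex and nonempty for $s\in[s_0,s_1]$ — it contains $\hat p$ — and $x\mapsto\|x-q\|^2$ is strictly convex.

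First I would dispose of the two extreme values. For $s=s_0$: since $V$ is closed and convex, $\hat p$ is the \emph{unique} point of $V$ nearest to $p$, so every other point of $V$ has distance strictly greater than $\sqrt{s_0}$ from $p$; hence $V\cap B_{s_0}(p)=\{\hat p\}$, giving $\gamma_{s_0}=\hat p$ and $\|p-\gamma_{s_0}\|^2=s_0$. For $s=s_1$: the point $\hat q$ lies in $V$ with $\|p-\hat q\|^2=s_1$, so $\hat q\in V\cap B_{s_1}(p)$; as $\hat q$ minimizes the distance to $q$ over all of $V$, it a fortiori minimizes it over the subset $V\cap B_{s_1}(p)$, so $\gamma_{s_1}=\hat q$ and $\|p-\gamma_{s_1}\|^2=s_1$.

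Now fix $s\in(s_0,s_1)$ and argue by contradiction, assuming $\gamma_s$ lies in the open ball $\openball{s}{p}$, i.e.\ $\|p-\gamma_s\|^2<s$. Choosing a small neighborhood $N$ of $\gamma_s$ with $N\subseteq\openball{s}{p}$, every $x\in V\cap N$ satisfies $x\in V\cap B_s(p)$, and since $\gamma_s$ minimizes the distance to $q$ over $V\cap B_s(p)$, it is in particular a local minimum of the distance-to-$q$ function restricted to $V$. But, as recorded just before the statement, this restricted function has a unique local minimum on the convex polytope $V$, namely $\hat q$; hence $\gamma_s=\hat q$. This is absurd, because $\|p-\hat q\|^2=s_1>s$ forces $\hat q\notin B_s(p)$, contradicting $\gamma_s\in V\cap B_s(p)$. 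Therefore $\|p-\gamma_s\|^2=s$ on all of $[s_0,s_1]$, and the monotonicity statement follows.

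The only delicate ingredient is the claim that the distance-to-$q$ function restricted to $V$ has a single local minimum, which I would simply invoke from the paragraph preceding the lemma (it is itself an easy consequence of convexity of $V$ and of $\|\cdot-q\|$). Everything else — well-definedness and uniqueness of $\hat p$, $\hat q$, $\gamma_s$, and the openness of the region where the ball constraint is slack — is routine convexity bookkeeping.
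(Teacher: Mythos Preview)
Your proof is correct and follows essentially the same route as the paper's: both argue that if $\|p-\gamma_s\|^2<s$ then $\gamma_s$ lies in the open ball, hence is a local minimum of the distance-to-$q$ function on $V$, forcing $\gamma_s=\hat q$ and contradicting $s<s_1$. The only cosmetic differences are that you treat the endpoint $s=s_0$ explicitly (the paper folds it into the general contradiction argument) and you spell out why $\gamma_s=\hat q$ is impossible (the paper just asserts $\gamma_s\neq\hat q$ for $s<s_1$).
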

\begin{lemma}
\label{lem:continuous}
The function $\gamma: [s_0,s_1]\to \R^d, s\mapsto \gamma_s$ is continuous and injective.
\end{lemma}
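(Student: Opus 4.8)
The plan is to establish both properties — continuity and injectivity — by exploiting the defining variational characterization of $\gamma_s$ together with the strict monotonicity of $s\mapsto\|p-\gamma_s\|^2$ already proved in Lemma~\ref{lem:gamma_monotone}.

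First, injectivity is the easy half and follows almost immediately from Lemma~\ref{lem:gamma_monotone}: if $\gamma_{s}=\gamma_{s'}$ then $\|p-\gamma_s\|^2=\|p-\gamma_{s'}\|^2$, i.e.\ $s=s'$. So the whole content of the lemma is continuity.

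For continuity, I would argue that $\gamma_s$ is the unique minimizer of a nicely behaved family of convex optimization problems and invoke stability of such minimizers. Concretely, $\gamma_s$ is the point of the compact convex set $V\cap B_s(p)$ closest to $q$, hence the unique minimizer of the strictly convex function $x\mapsto\|x-q\|^2$ over $V\cap B_s(p)$; uniqueness uses that $V\cap B_s(p)$ is convex and the objective is strictly convex. The feasible sets vary continuously in $s$: as $s$ increases or decreases slightly the set $V\cap B_s(p)$ changes in the Hausdorff metric continuously (it is nonempty for all $s\in[s_0,s_1]$, and $B_s(p)$ depends continuously on $s$ in Hausdorff distance). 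I would make this precise by a direct $\eps$–$\delta$ argument rather than quoting a black-box theorem: fix $s\in[s_0,s_1]$ and a sequence $s_n\to s$; the points $\gamma_{s_n}$ all lie in the compact set $V\cap B_{s_1}(p)$, so some subsequence converges to a limit $x^\ast$. One checks $x^\ast\in V\cap B_s(p)$ (closedness of $V$ and continuity of $\|\cdot-p\|$), and that $x^\ast$ minimizes $\|\cdot-q\|$ over $V\cap B_s(p)$: for any competitor $y$ in $V\cap B_s(p)$, one produces near-feasible points $y_n\in V\cap B_{s_n}(p)$ with $y_n\to y$ (for $s_n\geq s$ take $y_n=y$; for $s_n<s$ move $y$ slightly toward $\hat p$ along the segment $[\hat p,y]\subseteq V$ so that it lands in the smaller ball, which is possible since $\|\hat p-p\|^2=s_0\leq s_n$), giving $\|x^\ast-q\|=\lim\|\gamma_{s_n}-q\|\leq\lim\|y_n-q\|=\|y-q\|$. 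By uniqueness of the minimizer, $x^\ast=\gamma_s$; since every subsequence has a further subsequence converging to $\gamma_s$, the whole sequence converges, proving continuity.

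The main obstacle is the shrinking-feasible-set direction (approximating a feasible point of $V\cap B_s(p)$ by feasible points of the slightly smaller $V\cap B_{s_n}(p)$ when $s_n<s$); the segment-retraction trick toward $\hat p$ handles it cleanly because $\hat p\in V$ lies strictly inside $B_{s_n}(p)$ whenever $s_n>s_0$, and the boundary cases $s=s_0$ or $s=s_1$ are checked separately (at $s=s_0$ the set is the single point $\hat p$, and continuity there is immediate; at $s=s_1$ one only needs one-sided continuity). Everything else is routine compactness and convexity bookkeeping.
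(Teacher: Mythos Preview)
Your proof is correct and follows the same compactness/accumulation-point strategy as the paper: take a sequence $s_n\to s$, use boundedness to pass to a convergent subsequence, and show every accumulation point must equal $\gamma_s$. The only difference is in how the accumulation point is certified to be the minimizer: the paper argues by contradiction using the segment $[x^\ast,\gamma_s]\subset V\cap B_s(p)$ (whose interior has squared $p$-distance strictly below $s$, furnishing a better competitor for some $\gamma_{t_m}$), whereas you build approximate competitors $y_n\to y$ by retracting along $[\hat p,y]$ --- both maneuvers exploit convexity of $V$ in the same way and are equally valid.
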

It follows that $\gamma$ defines a path in $\R^d$ which we call the \emph{minimizing path} for $(V,p,q)$.
A minimizing path in the plane is displayed in Figure~\ref{fig:gamma_illustration} (in blue).
Because of the following lemma, we henceforth assume wlog that we only consider instances $(V,p,q)$ where $V$ is full-dimensional.
The statement follows easily by the Pythagorean Theorem~-- see Appendix~\ref{app:missing_proofs}.

\begin{lemma}[Dimension reduction]
\label{lem:dimension_reduction}
Let $V$ be a polytope in $\R^d$ contained in an affine subspace $W$. Let $p,q\in\R^d$ and $p',q'$
be the orthogonal projections of $p$ and $q$ to $W$, respectively. Then, the minimizing path of $(V,p,q)$
equals the minimizing path of $(V,p',q')$ up to a shift in the parameterization.
\end{lemma}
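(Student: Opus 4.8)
The plan is to reduce the computation of the minimizing path entirely to the affine subspace $W$ containing $V$, using the Pythagorean Theorem to split distances into a component inside $W$ and a fixed component orthogonal to $W$. First I would fix notation: let $\pi:\R^d\to W$ be the orthogonal projection, so $p'=\pi(p)$ and $q'=\pi(q)$, and for any point $x\in W$ write $\|x-p\|^2 = \|x-p'\|^2 + \|p-p'\|^2$ (and analogously for $q$), which holds because $x-p'$ lies in the linear part of $W$ while $p-p'$ is orthogonal to it. The key consequence is that for $x$ ranging over $V\subseteq W$, the functions $x\mapsto\|x-p\|^2$ and $x\mapsto\|x-p'\|^2$ differ by the constant $\|p-p'\|^2$, and likewise for $q$ versus $q'$. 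Hence the minimizer $\hat p$ of the distance to $p$ over $V$ coincides with the minimizer of the distance to $p'$ over $V$, and similarly $\hat q=\hat q'$; only the recorded squared radii shift by these constants.

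Next I would translate this to the parameterized family. Recall $\gamma_s$ is defined as the point of $V\cap B_s(p)$ closest to $q$, where $B_s(p)=\{x\mid \|x-p\|^2\le s\}$. Since for $x\in V$ we have $\|x-p\|^2\le s \iff \|x-p'\|^2\le s-\|p-p'\|^2$, the set $V\cap B_s(p)$ equals $V\cap B_{s-\|p-p'\|^2}(p')$. And minimizing $\|x-q\|^2$ over this set is the same as minimizing $\|x-q'\|^2$ over it, since the two objectives differ by the constant $\|q-q'\|^2$ on $W$. Therefore the point realizing the minimum is the same in both problems, giving $\gamma_s^{(V,p,q)} = \gamma_{s-\|p-p'\|^2}^{(V,p',q')}$. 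This is exactly the claimed equality of minimizing paths up to the shift $s\mapsto s-\|p-p'\|^2$ in the parameterization; one should also check that the endpoints $s_0,s_1$ transform consistently under this shift, which follows from the identifications $\hat p=\hat p'$, $\hat q=\hat q'$ already established.

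I expect no serious obstacle here; the only point requiring a little care is bookkeeping the two different constant shifts — $\|p-p'\|^2$ for the $s$-parameter coming from the ball around $p$, and $\|q-q'\|^2$ for the distance-to-$q$ objective — and verifying that the latter, being a shift of the objective by an additive constant, does not move the argmin. Both are immediate from the Pythagorean decomposition, so the whole argument is a short application of orthogonality. The upshot, which justifies the ``wlog full-dimensional'' assumption in the main text, is that running the analysis on $(V,p',q')$ inside $W\cong\R^{\dim W}$ loses nothing: the geometry of the minimizing path (its line segments and parabolic arcs) is intrinsic to $W$, and passing back to $\R^d$ only relabels the parameter interval.
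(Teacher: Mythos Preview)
Your proposal is correct and follows exactly the same route as the paper's proof: both use the Pythagorean decomposition to show that the objective $\|x-q\|^2$ and the constraint $\|x-p\|^2\le s$ on $V\subseteq W$ differ from their projected counterparts only by additive constants, yielding the same minimizers and a parameter shift of $\|p-p'\|^2$. Your write-up is in fact more careful than the paper's, which records the shift as $\delta:=\sqrt{\|p-p'\|}$ (an apparent typo for $\|p-p'\|^2$).
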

A \emph{face} of a convex polytope $V$ is the intersection of $V$ with a hyperplane $H$
such that all of $V$ lies in one of the closed half-spaces induced by $H$.
\begin{lemma}[Face lemma]
Let $\gamma$ be the minimizing path of $(V,p,q)$ and let $F$ be a face of $V$. 
Then, $\gamma\cap F$, the part of $\gamma$ that runs along $F$, is a subset of the minimizing path of $(F,p,q)$.
\end{lemma}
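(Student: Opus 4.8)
The plan is to unwind the defining property of the minimizing path together with the observation that passing to a face of $V$ only \emph{shrinks} the set over which the nearest point to $q$ is sought. Write $\delta$ for the minimizing path of $(F,p,q)$; this path is the one supplied by \cref{lem:dimension_reduction} after orthogonally projecting $p$ and $q$ onto the affine hull of $F$. It is convenient to describe $\delta$ intrinsically: put $t_0:=\min_{x\in F}\|p-x\|^2$, let $\hat{q}_F$ be the point of $F$ nearest to $q$, and put $t_1:=\|p-\hat{q}_F\|^2$; then $\delta$ is the image of the map $t\mapsto\delta_t$ for $t\in[t_0,t_1]$, where $\delta_t$ denotes the point of $F\cap B_t(p)$ nearest to $q$ --- this point exists and is unique because $F\cap B_t(p)$ is a non-empty compact convex set and squared distance to $q$ is strictly convex. (The projection onto the affine hull of $F$ only shifts the parameter, which is immaterial here, since the claim concerns $\gamma\cap F$ and $\delta$ merely as subsets of $\R^d$.) It therefore suffices to prove: if $\gamma_s\in F$, then $\gamma_s=\delta_s$ and $s\in[t_0,t_1]$.

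So fix $s$ with $\gamma_s\in F$. By definition $\gamma_s\in V\cap B_s(p)$, hence $\gamma_s\in F\cap B_s(p)$; in particular $F\cap B_s(p)\neq\emptyset$, so $s\geq t_0$. Since $F\cap B_s(p)\subseteq V\cap B_s(p)$ and $\gamma_s$, the point of $V\cap B_s(p)$ nearest to $q$, lies in $F\cap B_s(p)$, it is also the point of $F\cap B_s(p)$ nearest to $q$; by uniqueness of that nearest point, $\gamma_s=\delta_s$. It remains to rule out $s>t_1$. Suppose $s>t_1=\|p-\hat{q}_F\|^2$; then $\hat{q}_F\in F\cap B_s(p)$, and as $\hat{q}_F$ is nearest to $q$ among all points of $F$ it is also nearest among those of $F\cap B_s(p)$, so $\delta_s=\hat{q}_F$. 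Then $\gamma_s=\delta_s=\hat{q}_F$, and \cref{lem:gamma_monotone} yields $s=\|p-\gamma_s\|^2=\|p-\hat{q}_F\|^2=t_1$, contradicting $s>t_1$. Hence $s\in[t_0,t_1]$ and $\gamma_s$ lies on $\delta$, as required.

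I do not anticipate a genuine obstacle: the proof is a short chain of elementary facts about nearest points in convex sets. The only point requiring a little care is the passage to the affine hull of $F$ --- one must check, via the Pythagorean theorem and exactly as in the proof of \cref{lem:dimension_reduction}, that the nearest-point description of $\delta$ is unchanged under this orthogonal projection. Arguing throughout with the nearest-point characterization, instead of explicit parameterizations, is what keeps the proof to a few lines.
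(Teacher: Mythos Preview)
Your proof is correct and follows essentially the same approach as the paper: both hinge on the observation that if the minimizer of the distance to $q$ over $V\cap B_s(p)$ happens to lie in $F\subseteq V$, then it is automatically the minimizer over the smaller set $F\cap B_s(p)$. You are simply more careful than the paper in verifying that the parameter $s$ actually lies in the domain $[t_0,t_1]$ of the minimizing path of $(F,p,q)$, whereas the paper leaves this implicit (it is harmless there because even if $s>t_1$ one would get $\gamma_s=\hat{q}_F=\delta_{t_1}$, which is still on the image of $\delta$).
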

\begin{proof}
Let $x$ be a point on $\gamma\cap F$. By definition, $x=\gamma_s$ for some $s$, that is, $x$ is the closest point to $q$
in $V\cap B_s(p)$. Since $F\subseteq V$, $x$ is also the closest point to $q$ in $F\cap B_s(p)$, so $x$ lies on 
the minimizing path of $F$.
\end{proof}

By the Face Lemma, we know that the part of the minimizing path of $(V,p,q)$ that runs along $\partial V$ coincides
with the minimizing paths of its faces. It remains to understand the minimizing
path in the interior of $V$. The central concept to understand this sub-path is the following simple definition:
\begin{definition}
The \emph{bridge} of $(V,p,q)$ is the (possibly empty) line segment $V\cap\overline{pq}$,
where $\overline{pq}$ is the line segment of $p$ and $q$.
\end{definition}
\begin{lemma}[Bridge lemma]
Let $\gamma$ denote the minimizing path of $(V,p,q)$.
Then, the bridge is a subset of $\gamma$. Moreover, every point on $\gamma$ that does not belong to the bridge lies on $\partial V$.
\end{lemma}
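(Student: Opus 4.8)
\emph{Proof plan.}
I would prove the two assertions separately; both rest on the description of $\gamma_s$ as the \emph{unique} point of the convex set $V\cap B_s(p)$ closest to $q$ (uniqueness being the ``one local minimum'' property of the distance-to-$q$ function noted above).

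\textbf{The bridge is a subset of $\gamma$.}
I would pick an arbitrary point $x$ of the bridge and set $s:=\|p-x\|^2$. Since $x$ lies on the segment $\overline{pq}$, we have $\|q-x\|=\|q-p\|-\|p-x\|$, so for every $y\in B_s(p)$ the triangle inequality yields $\|q-y\|\ge \|q-p\|-\|p-y\|\ge \|q-p\|-\|p-x\|=\|q-x\|$; thus $x$ is a closest point to $q$ in $B_s(p)$, hence also in $V\cap B_s(p)$, and by uniqueness $x=\gamma_s$ --- \emph{provided} $s$ lies in the admissible range $[s_0,s_1]$. The bound $s\ge s_0$ is immediate from $x\in V$. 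For $s\le s_1$, let $x_1$ be the endpoint of the bridge (a sub-segment of $\overline{pq}$, being an intersection of convex sets) farthest from $p$, and suppose $\|p-x_1\|^2>s_1=\|p-\hat q\|^2$; then $\hat q\in V\cap B_{\|p-x_1\|^2}(p)$, and in this set both $x_1$ (by the computation above) and $\hat q$ (being the closest point of all of $V$ to $q$) are closest to $q$, so $x_1=\hat q$ by uniqueness, contradicting $\|p-x_1\|>\|p-\hat q\|$. Hence $\|p-x\|^2\le\|p-x_1\|^2\le s_1$ for every bridge point.

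\textbf{Points of $\gamma$ off the bridge lie on $\partial V$.}
Let $x=\gamma_s$ with $x\notin\partial V$. Assuming $V$ full-dimensional (Lemma~\ref{lem:dimension_reduction}), $x$ lies in the interior of $V$, so near $x$ the feasible set $V\cap B_s(p)$ agrees with $B_s(p)$; thus $x$ is a local, and by convexity of the ball a global, minimizer of $\|\cdot-q\|$ over $B_s(p)$. That minimizer is $q$ itself when $q\in B_s(p)$, and otherwise the point of $\overline{pq}$ at squared distance $s$ from $p$; in both cases it lies on $\overline{pq}$, and since it also lies in $V$, it lies on the bridge.

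\textbf{Main obstacle.}
The only step that is not an immediate consequence of convexity is the range check $s\le s_1$ in the first part: a priori the bridge could reach squared $p$-distances beyond $s_1$, and ruling this out needs the uniqueness of the closest point together with the fact that $\hat q$ globally minimizes distance to $q$ on $V$.
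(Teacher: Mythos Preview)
Your proof is correct and follows essentially the same approach as the paper: for the first part you use that a point on $\overline{pq}$ minimizes distance to $q$ over the ball $B_s(p)$, and for the second part you use that an interior point of $V$ minimizing over $V\cap B_s(p)$ must (locally, hence globally by convexity) minimize over $B_s(p)$ and therefore lie on $\overline{pq}$. The paper handles the second part by contradiction via the sphere $\partial B_s(p)$ rather than your direct convexity argument, and it skips the verification that $s\in[s_0,s_1]$ (calling it ``simple to verify''), which you supply explicitly---but these are cosmetic differences, not a different route.
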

\begin{proof}
Fix a point $x$ on the bridge and let $s:=\|x-p\|^2$. It is simple to verify that $s\in [s_0,s_1]$.
We argue that $x=\gamma_s$: indeed, the point $x$ is the closest point to $q$ in $B_s(p)$ (as it lies on $\overline{pq}$),
and since it also lies in $V$, it minimizes the distance to $q$ for the subset $V\cap B_s(p)$. That proves
the first part. 

For the second part, assume for a contradiction the existence of a point $y=\gamma_s$ for some $s\in [s_0,s_1]$
that is in the interior of $V$, but not on the bridge. Since $y$ must lie on the boundary of $V\cap B_s(p)$, and is not on $\partial V$,
it must lie in the interior of some spherical patch of $\partial B_s(p)$. The distance function to $q$, restricted to the $(d-1)$-dimensional 
sphere $\partial B_s(p)$
has no local minimum except at the intersection of $\overline{pq}$ with the boundary, but since $y$ does not lie on the bridge,
it is not that minimizing point. Hence, moving in some direction along the spherical patch
decreases the distance to $q$, contradicting the assumption that $y=\gamma_s$.
\end{proof}

The Bridge Lemma tells us that the minimizing path runs through the interior of $V$ at most along a single line segment, the bridge; before and after the bridge, it might have sub-paths on the boundary. Figure~\ref{fig:gamma_illustration} gives an example
where all three sub-paths are present.

\begin{theorem}[Structure Theorem]
\label{thm:structure_thm}
The minimizing path of $(V,p,q)$ is a simple path starting at $\hat{p}$ and ending at $\hat{q}$,
and every point on the path lies on the bridge of some face of $V$.
In particular, the path is a polygonal chain.
\end{theorem}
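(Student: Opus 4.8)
The plan is to reduce the whole statement to the Bridge Lemma and the Face Lemma via an induction on $\dim V$. First I would dispatch the cheap parts: simplicity of the path is exactly \Cref{lem:continuous}, and the two endpoints follow straight from the definitions -- at $s=s_0$ the set $V\cap B_{s_0}(p)$ is the single point $\hat p$ by uniqueness of nearest points in a convex set, so $\gamma_{s_0}=\hat p$; and at $s=s_1$ the point $\hat q$ lies in $V\cap B_{s_1}(p)$ (since $\|p-\hat q\|^2=s_1$) and is already the nearest point of $V$ to $q$, so $\gamma_{s_1}=\hat q$.

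For the structural claim I would induct on $\dim V$, using \Cref{lem:dimension_reduction} to assume at each level that $V$ is full-dimensional; the base case is a single point, where $\gamma$ is that point and coincides with the (degenerate) bridge. In the inductive step, the Bridge Lemma says that every point of $\gamma$ off $\partial V$ lies on the bridge $V\cap\overline{pq}$. I would first observe that $S:=\{s\in[s_0,s_1]\mid \gamma_s\in V\cap\overline{pq}\}$ is a single (possibly empty) subinterval $[a,b]$: by \Cref{lem:gamma_monotone} a point of the bridge at squared distance $s$ from $p$ is forced to equal $\gamma_s$, and as $s$ grows these points move monotonically along $\overline{pq}$, hence form a connected subsegment. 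This splits $\gamma$ into a head $\gamma|_{[s_0,a]}$, the bridge $\gamma|_{[a,b]}$ (a single line segment, lying on the bridge of $V$), and a tail $\gamma|_{[b,s_1]}$, with head and tail contained in $\partial V$.

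It then remains to handle the head, the tail being symmetric. As $\partial V$ is the union of the finitely many facets $F$ of $V$, the Face Lemma puts the head inside $\bigcup_F \gamma_F$, the union of the minimizing paths of these facets; by the inductive hypothesis (applied to each $F$ after \Cref{lem:dimension_reduction}), every $\gamma_F$ is a polygonal chain whose segments lie on bridges of faces of $F$, hence of $V$. Thus the head is a simple arc contained in a finite union of line segments, so it can change direction only at one of the finitely many segment endpoints or pairwise intersections and, by injectivity, visits each of these at most once; hence the head is a polygonal chain with finitely many segments, each contained in a segment of some $\gamma_F$. Concatenating head, bridge, and tail yields the claimed simple polygonal chain.

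I expect the finiteness in the last step to be the main obstacle: a priori nothing prevents the minimizing path from oscillating between facets of $V$ infinitely often near a lower-dimensional face. The observation that defuses this is that a simple arc inside a finite $1$-dimensional complex is automatically a finite polygonal path, so the real work is done by injectivity of $\gamma$ (\Cref{lem:continuous}); the rest is bookkeeping with the Bridge and Face Lemmas, plus a careful reading of which (projected) segment each face contributes.
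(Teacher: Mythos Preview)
Your argument is correct, but it takes a longer route than the paper's. The paper argues directly, without induction: every point $x$ of $\gamma$ lies in the relative interior of a unique face $F$ of $V$; by the Face Lemma and Dimension Reduction, $x$ lies on the minimizing path of $(F,p',q')$ with $F$ full-dimensional in its affine span; since $x$ is in the relative interior of $F$, the Bridge Lemma (applied to $F$) forces $x$ onto the bridge of $F$. This gives ``every point lies on some face's bridge'' in one shot, and the polygonal-chain conclusion follows immediately since there are finitely many faces.

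Your inductive scheme is sound and yields the same conclusion, but it introduces extra work: the interval argument for $S$, the head/bridge/tail decomposition, and the final topological observation that a simple arc in a finite $1$-complex is a polygonal path. None of this is wrong, but it is all bypassed by the paper's one-line use of the relative-interior stratification of $V$. What your approach does buy is a more explicit recursive description of $\gamma$ (head on $\partial V$, then bridge, then tail on $\partial V$), which is closer in spirit to how the algorithm in Section~\ref{sec:implementation} actually traces the path; so your decomposition is not wasted effort, just unnecessary for the bare structural statement.
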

\begin{proof}
It is clear that the path goes from $\hat{p}$ to $\hat{q}$ and is simple because $\gamma$ is injective.
Since every point $x$ on the path lies in the relative interior of some face $F$, the Face Lemma implies that
$x$ lies on the minimizing path of $(F,p',q')$ (with $p'$, $q'$ the projections in the subspace of $F$),
and the Bridge lemma implies that $x$ lies on the bridge of $F$. Hence, the path is contained
in a union of finitely many line segments, and therefore is a polygonal path.
\end{proof}

\subparagraph{Algorithm.}
Let $\sigma$ be a Delaunay simplex, $V$ its Voronoi polytope, and $p$
one of the closest sites. We follow the natural approach to compute the
minimizing path for $(V,p,q)$ first. Then, for every line segment $\overline{ab}$
on the minimizing path, we use the parameterization
\begin{align}
\label{eqn:param1}
s=\|p-((1-t)a+tb)\|^2 \qquad \qquad r=\|q-((1-t)a+tb)\|^2,
\end{align}
which for $t\in [0,1]$ yields a branch of the entry curve.
Both $s$ and $r$ are quadratic
polynomials in $t$, therefore the resulting curve is a parabola or line -- see Appendix~\ref{app:algebra} for a simple proof.

To compute the minimizing path, we outline the construction
and defer details to Appendix~\ref{app:computation}:
we compute the points $\hat{p}$ and $\hat{q}$
which are the start- and endpoint of that path.
Then we compute the bridges of $V$ and of all its
faces. This yields a collection of line segments
and we compute the induced graph whose vertices are endpoints of bridges
(in this graph, bridges can be split into sub-segments if the endpoint
of another bridge lies in the interior). This graph can be directed
such that the distance to $p$ increases along every edge.
In this graph, we walk from $\hat{p}$ to $\hat{q}$ to compute the minimizing
path; the only required predicate is to determine the next edge to follow
at a vertex $x$. This is the edge
along which the distance to $q$ drops the most
which can be easily determined by
evaluating the gradient of the (squared) distance function to $q$
restricted to the outgoing edges.

We sketch the complexity analysis of this algorithm,
again defering to Appendix~\ref{app:computation} for details:
Let $N$ be the size of the Delaunay triangulation.
Computing all bridges over all Delaunay simplices is linear in $N$.
Writing $f$ for the number of faces of $V$, the graph constructed for $V$
consists of $f$ bridges. Since bridges do not (properly) cross,
the constructed graph for $V$ has still $O(f)$ edges, and the traversal
to find the minimizing path is done in $O(f)$ as well.
This immediately yields the complexity bound.
\begin{theorem}
\label{thm:complexity}
Let $P$ be $n$ points in general position in $\R^d$ where $d$ is constant.
Let $N$ be the size of the Delaunay triangulation of $P$.
We can compute the entry curves of all Delaunay simplices in time 
$O(N)$.
\end{theorem}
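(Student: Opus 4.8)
The plan is to analyze the algorithm outlined above and to charge all of its work to the faces of the Voronoi polytopes. Fix a Delaunay simplex $\sigma$ and write $V_\sigma$ for its Voronoi polytope, $p$ for one of its nearest sites, and $f_\sigma$ for the number of faces of $V_\sigma$. By the structural results established above, the entry curve of $\sigma$ decomposes into the two rays supplied by the active-region lemma (the vertical ray $s=s_0$ above $(s_0,\|q-\hat p\|^2)$ and the horizontal ray $r=r_1$ to the right of $(s_1,r_1)$) together with, over the interval $s\in[s_0,s_1]$, the curve obtained from the minimizing path of $(V_\sigma,p,q)$ via the parameterization \eqref{eqn:param1}. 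By the Structure Theorem (Theorem~\ref{thm:structure_thm}) this path is a polygonal chain whose segments are bridges of faces of $V_\sigma$, hence it has at most $O(f_\sigma)$ segments, and each segment becomes one line segment or parabolic arc (the algebra is in Appendix~\ref{app:algebra}). Thus the entry curve of $\sigma$ is a list of $O(f_\sigma)$ arcs plus two rays; correctness of the greedy walk that produces the minimizing path, and of orienting the bridge graph by increasing squared distance to $p$, is exactly Lemma~\ref{lem:gamma_monotone} together with the Face and Bridge Lemmas, and is spelled out in Appendix~\ref{app:computation}. For the running time we only need the size bounds below.

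The key combinatorial fact is
\[
\sum_{\sigma\in\mathrm{Del}(P)} f_\sigma \;=\; O(N).
\]
Indeed, by Voronoi--Delaunay duality the faces of $V_\sigma$ are in inclusion-reversing bijection with the Delaunay simplices $\tau$ that contain $\sigma$, so $f_\sigma=\#\{\tau:\sigma\subseteq\tau\}$ and therefore
\[
\sum_{\sigma\in\mathrm{Del}(P)} f_\sigma=\sum_{\sigma\in\mathrm{Del}(P)}\#\{\tau:\sigma\subseteq\tau\}=\sum_{\tau\in\mathrm{Del}(P)}\#\{\sigma:\sigma\subseteq\tau\}\;\le\;\sum_{\tau\in\mathrm{Del}(P)}2^{d+1}\;=\;2^{d+1}N,
\]
using that every Delaunay simplex has dimension at most $d$ and thus at most $2^{d+1}$ faces, and that $d$ is a constant. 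Consequently, once the work spent on each $\sigma$ is shown to be $O(f_\sigma)$ with a constant depending only on $d$, the total running time is $O(N)$. (We assume the Delaunay triangulation together with its incidence structure is available; computing it is the standard one-time preprocessing and does not affect the statement.)

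It remains to bound the per-simplex cost. Enumerating the faces of $V_\sigma$ with their facet descriptions is $O(f_\sigma)$ from the triangulation's incidences. Computing $\hat p$ and $\hat q$ is a closest-point (convex) projection onto a polytope given by $O(f_\sigma)$ halfspaces in the fixed dimension $d$, which for constant $d$ can be done in $O(f_\sigma)$ time (e.g.\ as an LP-type problem of combinatorial dimension $O(d)$; Appendix~\ref{app:computation}). By Lemma~\ref{lem:dimension_reduction} the bridge of a face $F$ is the intersection of $F$ with the line through the orthogonal projections of $p$ and $q$ onto $\mathrm{aff}(F)$, so all $f_\sigma$ bridges of $V_\sigma$ are obtained in $O(f_\sigma)$ time. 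The bridge graph has $O(f_\sigma)$ vertices (bridge endpoints), and since the bridges form a non-crossing family, subdividing them at coincident endpoints still leaves $O(f_\sigma)$ edges; orienting these edges and walking from $\hat p$ to $\hat q$ --- inspecting at each visited vertex the gradient of the squared distance to $q$ along each outgoing edge in $O(1)$ time --- is again $O(f_\sigma)$. Finally, turning each of the $O(f_\sigma)$ segments of the resulting path into an arc of the entry curve via \eqref{eqn:param1} costs $O(1)$ each. Summing, the cost for $\sigma$ is $O(f_\sigma)$, and with the displayed bound the theorem follows.

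The main obstacle is the second half of this argument, not the first: the duality double-count $\sum_\sigma f_\sigma=O(N)$ is immediate, but it only pays off if the per-simplex routine is \emph{strictly} linear in $f_\sigma$. The one step where this is not automatic is the bridge graph, since a generic family of $f_\sigma$ segments has an arrangement of size $\Theta(f_\sigma^2)$; one therefore has to exploit that the bridges of the faces of $V_\sigma$, although lying in different affine subspaces, never properly cross, so that the subdivided graph stays linear-sized. This non-crossing property, together with solving the closest-point subproblems in time linear in the facet count for fixed $d$, is the technical heart and is carried out in Appendix~\ref{app:computation}.
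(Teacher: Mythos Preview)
Your proposal is correct and follows essentially the same route as the paper: the double-count $\sum_\sigma f_\sigma=\sum_\tau\#\{\sigma\subseteq\tau\}\le 2^{d+1}N$ is exactly the paper's charging argument, and the $O(f_\sigma)$ per-simplex bound via the non-crossing bridge graph is the paper's core as well. The only minor differences are organizational: the paper precomputes each bridge once globally (noting that all sites of $\tau$ project to the same point on $\mathrm{aff}(V_\tau)$, so the bridge of a face does not depend on which $\sigma\subseteq\tau$ we are currently processing) and finds $\hat p,\hat q$ simply as the distance-minimizing vertices of the already-built bridge graph rather than by a separate LP-type projection; both variants give the same $O(f_\sigma)$ bound.
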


\section{Implementation}
\label{sec:implementation}
We have implemented the case of absolute localized bifiltrations in the plane
using the \textsc{Cgal} library. Our code computes the Delaunay triangulation~\cite{cgal:2dtriang}
and computes the minimizing path for each simplex, using a simplified
algorithm for the plane. 
For the minimizing path of a Delaunay vertex, we refer to Figure~\ref{fig:alg_2d} for an illustration:
The path is obtained by 
starting in $p$ and following the bridge until the boundary is hit. The path must continue in one of the two
directions along the boundary and follows the boundary until $\hat{q}$ is met (see the red path in Fig.~\ref{fig:alg_2d}). The decision on which direction
to follow can be answered by projecting $q$ to the supporting line of the boundary segment that is hit by the bridge 
and going towards that projection point $q'$ (the bridge might also hit the boundary in a vertex; we ignore
this degenerate case in our prototype).

All geometric predicates required for this implementation are readily available in the geometric kernel of \textsc{Cgal}~\cite{cgal:kernel}.
For the absolute, planar problem, it was possible to work exclusively with the Delaunay triangulation,
avoiding the explicit construction of the Voronoi polytope, but this will probably not be possible for other variants.
In either case, the exact geometric computation paradigm of \textsc{Cgal} guarantees that the obtained parameterization
is an exact representation of the simplicial bifiltration for the input data.

\begin{figure}[ht]
\centering
\includegraphics[width=8cm]{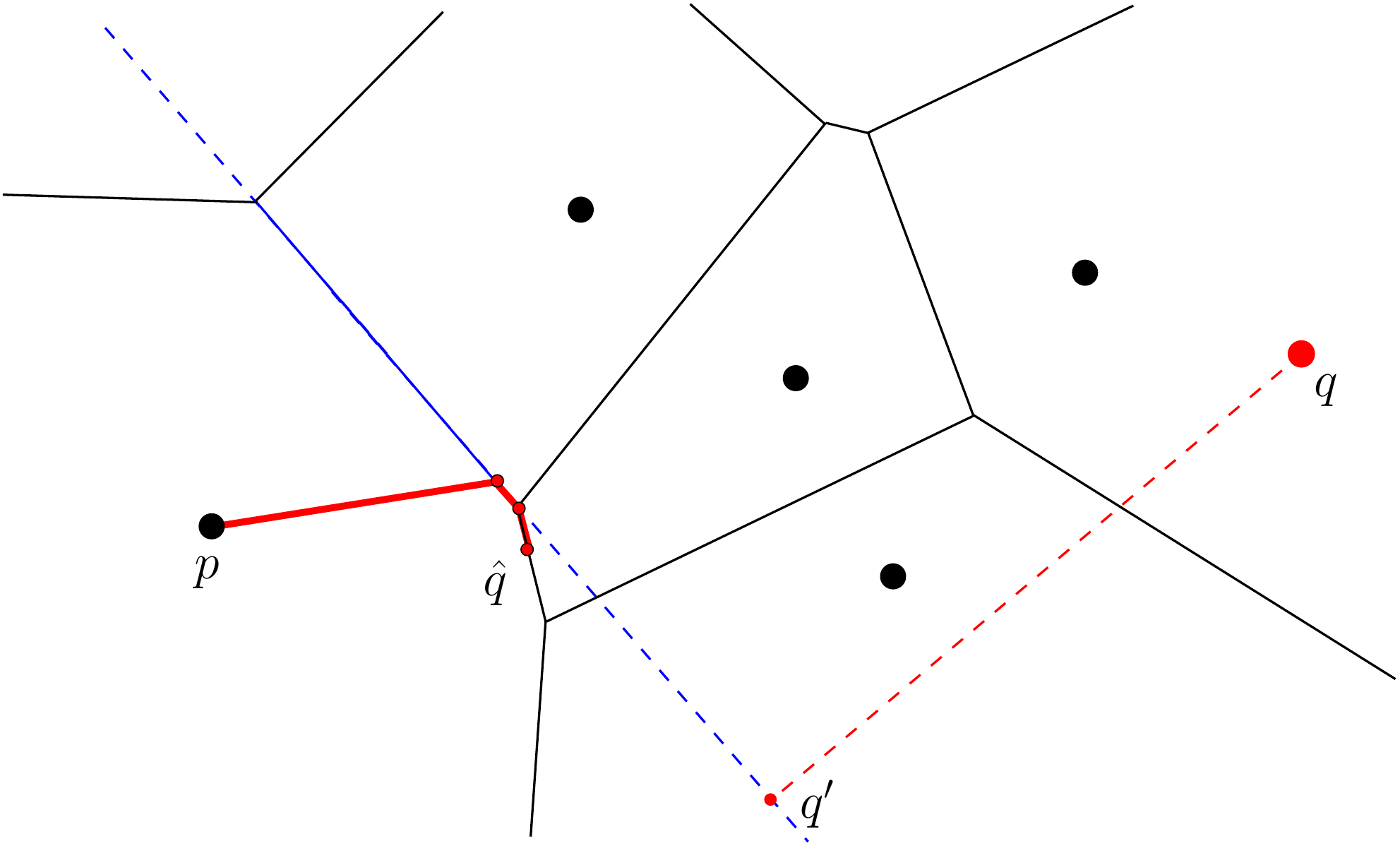}
\caption{Illustration of the minimizing path in the plane for a Voronoi region.}
\label{fig:alg_2d}
\end{figure}

\section{The relative case}
\label{sec:relative}
\subparagraph{Relative localized bifiltrations.} 
To better fit the situation studied in this section, we re-define $(s,r)\leq (s',r')$ if $s\leq s'$ and $r\geq r'$. 
With this poset, we can define bifiltrations, finite simplicial bifiltrations, and equivalence in an analogous way as in Section~\ref{sec:absolute}.
We extend these notions to pairs:
A \emph{bifiltration of pairs} is a collection of pairs of topological spaces (or simplicial complexes) $(X,A):=(X_{s,r},A_{s,r})_{s,r\geq 0}$ such that $X$ and $A$ are bifiltrations
and $A_{s,r}\subseteq X_{s,r}$. We have a \emph{finite simplicial bifiltration of pairs} if $X$ is a finite simplicial bifiltration,
and $A_{s,r}$ is a subcomplex of $X_{s,r}$ (which implies that also $A$ is finite simplicial).
We call two bifiltrations of pairs $(X,A)$ and $(Y,B)$ \emph{equivalent} if there is an equivalence $\phi:X\to Y$ such that 
the restriction maps $(\phi|_{A_{s,r}})_{s,r\geq 0}$ yield an equivalence of $A$ and $B$.

For a finite point set $P$, a center $q$ and $s,r\geq 0$, set $L_s=\bigcup_{p\in P} B_s(p)$ for the union of balls
and $\openball{r}{q}$ for the open ball around the center. Then the \emph{relative localized bifiltration} is the following collection of pairs of spaces
\[
\rb := (L_s,L_s\setminus\openball{r}{q})_{s,r\geq 0}.
\]
Indeed, it can be checked easily that this construction yields a bifiltration of spaces (notice the contravariance in the $r$-parameter).
Our computational task is to find a finite simplicial bifiltration of pairs equivalent to it.

What is the significance of this bifiltration? Note first that in general, a pair of bifiltrations $(X,A)$ induces
a (relative) persistence module $(H_n(X_{s,r},A_{s,r}))_{s,r\geq 0}$, leading to
\[
 H_n(\rb)=(H_n(L_s,L_s\setminus\openball{r}{q}))_{s,r\geq 0},
\]
which we call the \emph{localized relative persistence module}. 
By excision and the fact that all pairs are ``good pairs''~\cite{hatcher}, 
we have isomorphisms \cite{sw-approx}
\[
H_n(L_s,L_s\setminus\openball{r}{q})\cong H_n(L_s\cap B_r(q),L_s \cap \partial B_r(q))\cong \tilde{H}_n(L_s\cap B_r(q)/ (L_s \cap \partial B_r(q)))
\]
with $\tilde{H}$ denoting reduced homology. These isomorphisms imply that the relative localized persistence module corresponds point-wise to the space considered
in Figure~\ref{fig:absolute_vs_relative} (right). Moreover, the excision isomorphism commutes with the inclusion maps $L_s\subseteq L_{s'}$ for $s\leq s'$
which implies that the rows of the localized relative persistence module (with fixed $r$) are isomorphic to the module
$
(H_n(L_s\cap B_r(q),L_s \cap \partial B_r(q)))_{s\geq 0}
$
which was studied in \cite{bceh-lhfromstratified},\cite{bwm-lhtransfer_strat}, \cite{sw-approx}. Also the columns of the bifiltration (with $s$ fixed) have been studied earlier in \cite{sw-approx}.
Hence, the localized relative bifiltration encodes both types of modules of local persistent homology studied in previous work (see Appendix \ref{app:Homology} for a summary of basic notions).

We also remark that the computation of the relative localized persistence module can easily be reduced to the case of absolute homology
via the well-known coning construction~\cite[p.125]{hatcher}, yielding isomorphisms
$
H_n(X,A)\cong \tilde{H}_n(X\cup \omega\ast A)
$
for pairs of topological spaces $(X,A)$ with $\omega$ denoting a new vertex. These isomorphisms are functorial, yielding
an isomorphism between the relative persistence module of a pair of bifiltrations and the absolute persistence module of a bifiltration (using reduced homology).
Moreover, if the pair $(X,A)$ is finite simplicial, so is $X\cup \omega\ast A$.

\subparagraph{Nerves of pairs.}
To obtain an equivalent pair of finite simplicial complexes, we will define suitable covers for $L_s$ and $L_s\setminus\openball{r}{q}$
and construct the corresponding nerve complexes. For a (compact) topological space $X$, a cover $\cover_X$ of closed sets is \emph{good}
if the intersection of any subset of cover elements of $\cover_X$ is empty or contractible. In particular, a closed convex cover (as used in Section~\ref{sec:absolute})
is good because convex sets are contractible, and a non-empty intersection of convex sets is convex.

For a pair $(X,A)$ of spaces with $A\subseteq X$ closed, a closed cover $\cover_X$ of $X$ induces a cover $\cover_A$ of $A$ by restricting every cover element to $A$.
We say that the cover $\cover_X$ is \emph{good for the pair $(X,A)$} if $\cover_X$ and $\cover_A$ are both good covers. With this definition, we obtain the following
version of the Persistent Nerve Theorem; it follows directly from the results of \cite{bkrr-unified}
and we summarize the argument in Appendix~\ref{app:missing_proofs}.

\begin{theorem}[Functorial nerve theorem of pairs] 
	\label{thm:rel_nerve}
	Let $A\subseteq X \subseteq \mathbb{R}^d$ and $\cover_X$ be a good cover for the pair $(X,A)$. Then the spaces $X$ and $\nrv\,\cover_X$
are equivalent via a map $\phi:X\to\nrv\,\cover_X$,
and the restriction map of $\phi$ to $A$ yields a homotopy equivalence of $A$ and $\nrv\,\cover_A$.
Moreover, the map $\phi$ is functorial which means that if $X\subseteq X'$ and $A\subseteq A'$, the cover $\cover_{X'}$ is obtained by enlarging each
element of $\cover_X$ (or leaving it unchanged) and is a good cover for the pair $(X',A')$, then the maps $\phi$ and $\phi'$ commute with the inclusion maps $X\to X'$ and 
$\nrv\,\cover_X\to\nrv\,\cover_{X'}$, and the restricted maps to $A$ and $A'$ commute
with the inclusion maps $A\to A'$ and $\nrv\,\cover_A\to\nrv\,\cover_{A'}$.
\end{theorem}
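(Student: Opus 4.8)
The plan is to derive the statement from the functorial nerve theorem for single spaces proved in \cite{bkrr-unified}, applied separately to the covered spaces $(X,\cover_X)$ and $(A,\cover_A)$, and then to glue the two conclusions together using naturality with respect to the inclusion $A\hookrightarrow X$.

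First, the assumption that $\cover_X$ is good \emph{for the pair} $(X,A)$ is exactly what makes both applications of the nerve theorem possible: by definition $\cover_X$ is a good cover of $X$ and the restricted cover $\cover_A=\{U\cap A\mid U\in\cover_X\}$ is a good cover of $A$; both are closed covers of a compact space, so the point-set hypotheses of the unified nerve theorem are met. Applying it to $(X,\cover_X)$ yields a homotopy equivalence $\phi\colon X\to\nrv\,\cover_X$ that is natural in the covered space, and applying it to $(A,\cover_A)$ yields a homotopy equivalence $\phi_A\colon A\to\nrv\,\cover_A$.

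Second, I would package the inclusion as a morphism of covered spaces. The inclusion $\iota\colon A\hookrightarrow X$ carries each element $U\cap A$ of $\cover_A$ into the element $U$ of $\cover_X$, so it is a morphism of covered spaces in the sense of \cite{bkrr-unified}; on nerves it induces the evident simplicial inclusion $j\colon\nrv\,\cover_A\hookrightarrow\nrv\,\cover_X$ (if $U_{i_0}\cap A,\ldots,U_{i_k}\cap A$ have a common point, then so do $U_{i_0},\ldots,U_{i_k}$), where deleted cover elements $U$ with $U\cap A=\emptyset$ simply contribute no vertex to $\nrv\,\cover_A$. Because the nerve map of \cite{bkrr-unified} is a natural transformation, this produces a commuting square $\phi\circ\iota=j\circ\phi_A$. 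Reading the square as the statement that $\phi$ restricted to $A$ factors as $\phi_A$ followed by $j$ gives the first two claims of the theorem; in particular $\phi|_A$ is a homotopy equivalence of $A$ onto $\nrv\,\cover_A$ since $\phi_A$ is.

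Finally, for the functoriality clause I would run the same bookkeeping one level up. The hypothesis that $\cover_{X'}$ arises from $\cover_X$ by enlarging cover elements (and is good for $(X',A')$) makes all four inclusions $(A,\cover_A)\to(X,\cover_X)$, $(A',\cover_{A'})\to(X',\cover_{X'})$, $(X,\cover_X)\to(X',\cover_{X'})$ and $(A,\cover_A)\to(A',\cover_{A'})$ into morphisms of covered spaces; naturality of the nerve map along each of them yields exactly the commuting squares asserted in the theorem. The only genuinely nontrivial point in the whole argument is that one and the same map $\phi$ resolving $X$ can be chosen to restrict correctly on $A$ — this is why the \emph{functorial} form of the nerve theorem is needed rather than a bare homotopy equivalence, and why I would phrase everything through morphisms of covered spaces so that a single invocation of the natural transformation of \cite{bkrr-unified} does all the work. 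I expect this organizational step, rather than any computation, to be where the care is required.
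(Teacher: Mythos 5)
Your overall strategy coincides with the paper's: reduce everything to the single-space functorial nerve theorem of \cite{bkrr-unified} and exploit its naturality along the morphism of covered spaces $(A,\cover_A)\to(X,\cover_X)$ (and along the four inclusion morphisms for the functoriality clause). The bookkeeping with covered-space morphisms and the induced simplicial inclusion $j\colon\nrv\,\cover_A\hookrightarrow\nrv\,\cover_X$ is correct. However, there is a genuine gap at exactly the step you yourself flag as ``the only genuinely nontrivial point'': asserting that ``a single invocation of the natural transformation of \cite{bkrr-unified} does all the work'' is circular, because \cite{bkrr-unified} does not provide a strictly natural direct map $X\to\nrv\,\cover_X$. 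What is strictly natural there is the zigzag $X \xleftarrow{\rho_S} \mathrm{Blowup}(\cover_X)\xrightarrow{\rho_N}\nrv\,\cover_X$, both legs of which are homotopy equivalences under the stated hypotheses; the map $\phi$ of the theorem is $\rho_N^{(X)}\circ\tau_S^{(X)}$ for some chosen homotopy inverse $\tau_S^{(X)}$ of $\rho_S^{(X)}$. Naturality of the zigzag only gives that your square $\phi\circ\iota=j\circ\phi_A$ commutes up to homotopy, which is not enough: the theorem claims that the literal restriction $\phi|_A$ has image in $\nrv\,\cover_A$ and is a homotopy equivalence onto it, and that the squares in the functoriality clause commute on the nose.

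To close the gap you must show that the homotopy inverse can be chosen compatibly, which is what the paper does: it writes the two zigzags into one commutative diagram with vertical inclusions $\mathrm{Blowup}(\cover_A)\hookrightarrow\mathrm{Blowup}(\cover_X)$ and $\nrv\,\cover_A\hookrightarrow\nrv\,\cover_X$, observes that $\rho_S^{(A)}$ and $\rho_N^{(A)}$ are literally the restrictions of $\rho_S^{(X)}$ and $\rho_N^{(X)}$, and then argues that a homotopy inverse of $\rho_S^{(X)}$ restricts to a homotopy inverse of $\rho_S^{(A)}$. Your write-up should either make this compatible choice explicit or route the entire statement through the blowup complex rather than through a hypothetical natural map $X\to\nrv\,\cover_X$; with that repair the rest of your argument goes through as you describe.
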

This (rather bulky) theorem implies the following simple corollary in our situation: If we can find a filtration of (good) covers $\cover_{s}$ for the union of balls $L_s$,
such that for every $r\geq 0$, the induced cover $\cover_{s,r}$ on $L_s\setminus\openball{q}{r}$ is a good cover, then the pair $(\mathrm{Nrv}\,\cover_s,\mathrm{Nrv}\,\cover_{s,r})_{s,r\geq 0}$
is a finite simplicial bifiltration of pairs equivalent to the localized relative bifiltration.

\subparagraph{A good cover in the plane.}
A natural attempt to construct a good cover for $(L_s,L_s\setminus\openball{r}{q})$ would be to consider the cover induced by the Voronoi regions of $P$,
namely $\{\Vor(p)\cap B_s(p) | p \in P\}$,
similar as in Section~\ref{sec:absolute}.
While this cover is good for $L_s$, it is not good for the pair, because the removal of an (open) ball can lead to non-contractible cover elements and intersections;
Figure~\ref{fig:problems_with_cover} illustrates several problems.

\begin{figure}[ht]
\centering
\includegraphics[width=10cm]{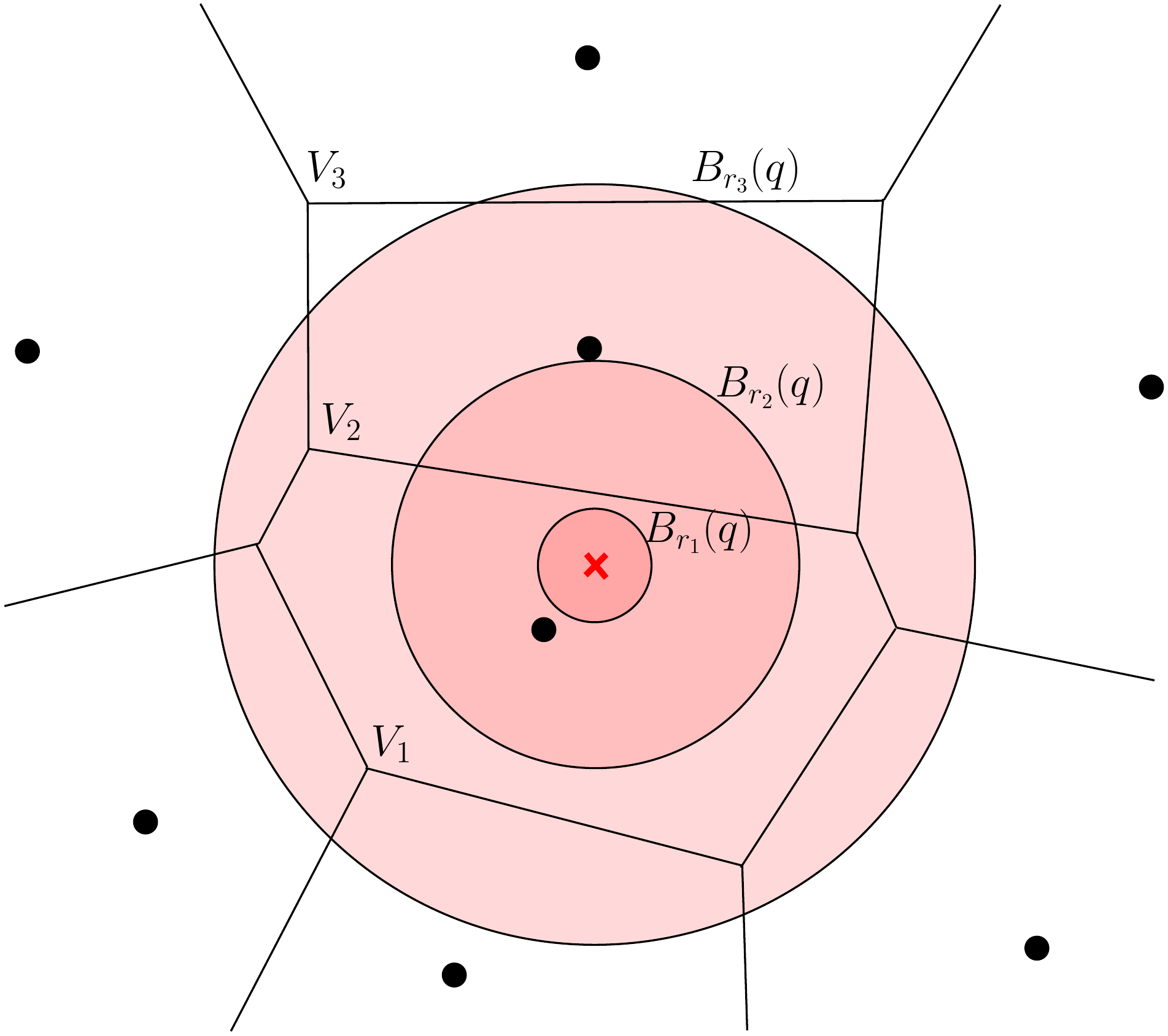}
\caption{Problems that arise when removing a ball $B_r(q)$ from the Voronoi cover: for radius $r_1$, the Voronoi region $V_1$ becomes non-simply connected.
For $r_2$, the Voronoi regions $V_1$ and $V_2$ intersect in two connected components. For $r_3$, the Voronoi region $V_2$ is disconnected.}
\label{fig:problems_with_cover}
\end{figure}

While the Voronoi cover itself does not work, we show that after suitably subdividing the Voronoi cells into convex pieces,
the induced cover is good for the pair $(L_s,L_s\setminus\openball{r}{q})$.
We will give the construction and will informally explain en passant why it removes all obstructions for being good
that are visible in Figure~\ref{fig:problems_with_cover}.

The construction has two parts: first, we split the Voronoi cell $\Vor(p)$ that contains $q$ into two pieces,
by cutting along the line $\overline{pq}$ (we assume for simplicity that $q$ lies in the interior of some Voronoi cell).
This initial cut ensures in particular that $q$ does not lie in the interior of a cover set anymore; this is 
a necessary condition because otherwise, removing a sufficiently small open ball $\openball{r}{q}$
yields a non-simply connected set in the induced cover for $L_s\setminus\openball{r}{q}$ (compare
radius $r_1$ in Figure~\ref{fig:problems_with_cover}). This cut also avoids
configurations where $\Vor(p)\cap B_s(p)$ gets disconnected because $B_r(q)$ is touching $B_s(p)$
from the inside; we refer to the proof of Lemma~\ref{lem:boundary} in Appendix~\ref{app:missing_proofs} for details.

To understand the second part of the construction,
we call a line segment or ray $e$ \emph{problematic} if
the distance function to $q$, restricted to e,
has a local minimum in the interior of $e$.
This is equivalent to the property that the orthogonal
projection of $q$ on the supporting line of the segment lies on $e$.
Observe in Figure~\ref{fig:problems_with_cover} that both for radius $r_2$
and $r_3$, the issue comes from a problematic edge (for $r_2$, the edge between $V_1$ and $V_2$
is problematic, for $r_3$, the edge between $V_2$ and $V_3$ is problematic).

In the second part of the construction, we go over all problematic edges in the Voronoi diagram of $P$.
For every such edge $e$, let $\hat{q}$ denote the orthogonal projection of $q$ on that edge. 
The line segment $\overline{q\hat{q}}$ cuts through one of the polygons $P$ incident to $e$,
and we cut $P$ using this line segment into two parts. This ends the description of the subdivision.
We denote the set of $2$-dimensional polygons obtained as $S_2(P,q)$.
See Figure~\ref{fig:subdivision} for an illustration. Note that the cuts introduced here
cut every problematic edge into two non-problematic sub-edges
and also avoids that a polygon gets disconnected when removing $B_r(q)$.

\begin{figure}[ht]
\centering
\includegraphics[width=8cm]{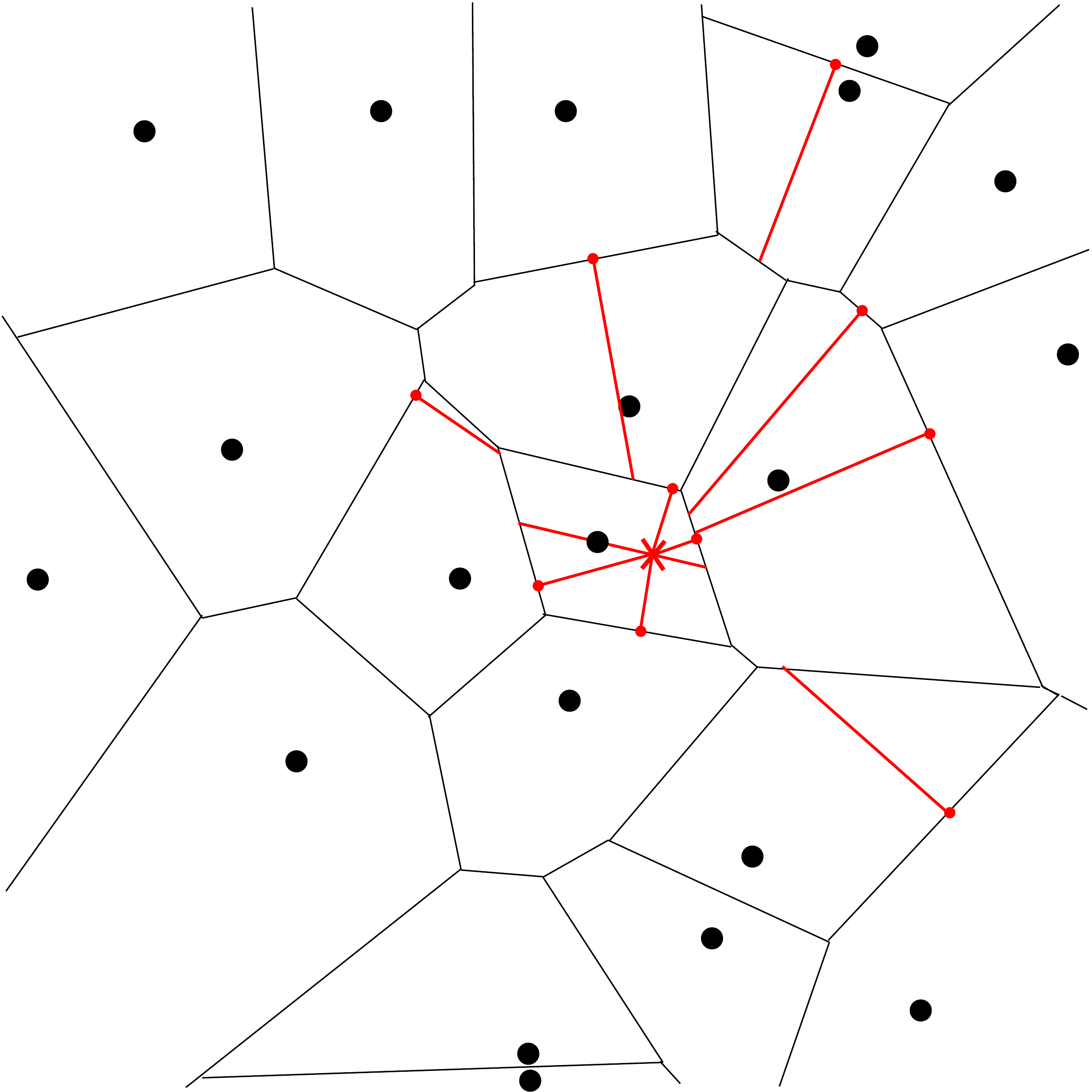}
\caption{Illustration of the subdivision $S_2(P,q)$. The subdivision lines
are in red.}
\label{fig:subdivision}
\end{figure}

Importantly, all cuts introduced are line segments on lines through $q$. This implies that no two of the cuts
can cross.  Moreover, no edge arising from a cut can be problematic. This leads to the following theorem,
whose complete proof is given in Appendix~\ref{app:missing_proofs}
\begin{theorem}
\label{thm:good_cover}
The cover $\{L_s\cap V\mid V\in S_2(P,q)\}$ is good for $(L_s,L_s\setminus\openball{r}{q})_{s,r\geq 0}$.
\end{theorem}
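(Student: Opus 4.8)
The plan is to fix parameters $s,r\ge 0$ and unwind the definition of ``good for the pair'': I must show that $\cover_s:=\{L_s\cap V\mid V\in S_2(P,q)\}$ is a good cover of $L_s$ and that the restricted cover $\{(L_s\cap V)\setminus\openball{r}{q}\mid V\in S_2(P,q)\}$ is a good cover of $L_s\setminus\openball{r}{q}$; the functoriality hypotheses of Theorem~\ref{thm:rel_nerve} are then immediate (raising $s$ enlarges each $L_s\cap V=B_s(p)\cap V$, lowering $r$ enlarges each $L_s\setminus\openball{r}{q}$, matching the poset convention used here), so the statement follows from the corollary to Theorem~\ref{thm:rel_nerve}. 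The first half is routine: inside a Voronoi cell we have $L_s\cap\Vor(p)=B_s(p)\cap\Vor(p)$, since any point of $\Vor(p)$ contained in some $B_s(p')$ is already contained in $B_s(p)$; hence for a piece $V\subseteq\Vor(p)$ of $S_2(P,q)$ we get $L_s\cap V=B_s(p)\cap V$, and every nonempty intersection of cover elements is an intersection of balls and convex (sub-)polygons, hence convex, hence contractible.

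The substance lies in the restricted cover. Let $C$ be a nonempty intersection of finitely many elements of $\cover_s$, so by the above $C=\bigcap_i\bigl(B_s(p_i)\cap V_i\bigr)$ is compact and convex, and I must show $C':=C\setminus\openball{r}{q}$ is empty or contractible. First I would record that $q\notin\mathrm{int}(C)$: the point $q$ lies in the interior of a single Voronoi cell, and Part~1 of the construction splits that cell along the chord on the line $\overline{pq}$, so $q$ lies on the boundary of every piece of $S_2(P,q)$ that it meets and outside all the others; hence $q\notin\mathrm{int}(V_i)$ for all $i$. This rules out a bounded ``hole'' in $C'$ --- concretely, $\R^2\setminus C'=(\R^2\setminus C)\cup(C\cap\openball{r}{q})$ is then connected, so it is enough to prove that $C'$ itself is connected in order to conclude (via standard planar topology, using that $C'$ is triangulable) that $C'$ is contractible. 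Next I would split off the easy case $\|\hat q-q\|^2\ge r$, where $\hat q$ denotes the point of $C$ closest to $q$: here $C'$ is star-shaped about $\hat q$, since for $x\in C'$ and $z(t)=\hat q+t(x-\hat q)$ the projection inequality $(x-\hat q)\cdot(q-\hat q)\le 0$ yields $\|z(t)-q\|^2\ge\|\hat q-q\|^2\ge r$ for $t\in[0,1]$, so $\overline{\hat q x}\subseteq C'$.

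The hard case is $\|\hat q-q\|^2<r$, where the nearest part of $C$ has been removed and $C'$ might fall apart. My plan is to show that $C\cap\partial B_r(q)$ is connected (possibly empty) for every $r$; a point-set argument on the decomposition $C=(C\cap\overline{B_r(q)})\cup C'$, whose pieces are closed with connected first piece and intersection $C\cap\partial B_r(q)$, then forces $C'$ to be connected. Connectedness of $C\cap\partial B_r(q)$ for all $r$ is equivalent to $x\mapsto\|x-q\|^2$ having a unique local minimum along the convex curve $\partial C$, and this is where the subdivision enters. The curve $\partial C$ is a concatenation of circular arcs, each on some $\partial B_s(p_i)$, and straight segments, each on an edge of some $V_i$. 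A local minimum in the relative interior of a straight segment occurs only if that segment is \emph{problematic}; by construction no cut edge is problematic and every problematic Voronoi edge has been subdivided, so the only exception is the chord on $\overline{pq}$, whose interior minimum is exactly $q$, and then $q\in C$ forces $\hat q=q$. A local minimum at a (necessarily convex) vertex of $\partial C$ must be $\hat q$. A local minimum on a circular arc can only be the point of the corresponding circle closest to $q$, and I expect to show, distinguishing whether $q$ lies inside or outside $B_s(p_i)$ and using $C\subseteq B_s(p_i)\cap V_i$ together with convexity, that any such point lying on $\partial C$ equals $\hat q$. Finally the multi-site cases are degenerate: three distinct sites among the $p_i$ make $C$ a single point (a Voronoi vertex), and two distinct sites $p_i,p_j$ make $C$ a segment on their bisector --- and here the decisive point is that Part~2 cuts the problematic edge $\Vor(p_i)\cap\Vor(p_j)$ by a \emph{radial} segment through the perpendicular foot of $q$ in one of the two incident cells, so that $V_i\cap V_j$ never straddles that foot and $\|\cdot-q\|^2$ stays monotone on $C$. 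The main obstacle I anticipate is precisely this last uniqueness argument --- excluding a second local minimum where a near-point of some bounding circle competes with $\hat q$ --- which appears to require the delicate case analysis just sketched and is what Lemma~\ref{lem:boundary} is for; the two facts that make it work are that all cuts lie on lines through $q$ (hence are never problematic and are crossed at most once by a $q$-centred circle) and that each problematic Voronoi edge is cut through its perpendicular foot.
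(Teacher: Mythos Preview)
Your plan is correct and follows the paper's route closely: both reduce goodness of the restricted cover to the claim that $\|\cdot-q\|^2$ has a unique local minimum on the boundary of each $V\cap L_s$, proved by the same segment/vertex/arc case split (the paper's Lemma~\ref{lem:boundary}), with pairwise and triple intersections dispatched as non-problematic sub-segments and points. The only differences are cosmetic---the paper deduces connectedness of a cover element by pushing every point radially away from $q$ onto the connected set $\partial(V\cap L_s)\setminus\openball{r}{q}$ rather than via your decomposition $C=(C\cap\overline{B_r(q)})\cup C'$---and one subtlety your sketch should make explicit: only \emph{one} of the two cells adjacent to a problematic Voronoi edge is cut, so the polygon on the far side still carries the full problematic edge on its boundary; the paper closes this (Case~1 of Lemma~\ref{lem:boundary}) by observing that this polygon then lies entirely in the far half-plane of that edge, whence the perpendicular foot is automatically its global nearest point~$\hat q$.
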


Every cut increases the number of polygons by one, and the number
of edges by at most $3$. Since we introduce at most one cut per Voronoi edge, plus one initial cut,
the complexity of the subdivision is $O(n)$. 
%It is also not hard to see that given the Voronoi diagram
%as a planar subdivision in the plane (for instance, a DCEL~\cite{dutch-book}), we can compute the sudivision in linear time.

\subparagraph{Entry curves.}
It is left to compute the entry curves of every simplex in the nerve induced by $S_2(P,q)$.
The described method generalizes also to higher dimensions assuming a subdivision inducing a good cover
is available, but we keep the description planar for simplicity.
In this case, every simplex $\sigma$ represents a polygon, edge, or vertex of the planar subdivision defined by $S_2(P,q)$,
which we denote by $V$.
Note that $\sigma$ has two entry curves: one for $L_s$ and one for $L_s\setminus\openball{r}{q}$.
The former is simple to describe: since $L_s$ does not depend on $r$, the entry curve is determined by the line
of the form $s=\|p-\hat{p}\|^2$, with $p$ a closest site to $V$ and $\hat{p}$ the closest point to $p$ in $V$.

For $L_s\setminus\openball{r}{q}$, we fix $s$ and search for the largest $r$ such that $B_s(p)\cap V\setminus\openball{r}{q}$
is not empty. This is equivalent to finding the point on $V\cap B_s(p)$ with \emph{maximal} distance to $q$,
which is the opposite problem to what was considered in Section~\ref{sec:absolute}.
For brevity, we restrict the discussion to the case of bounded polytopes $V$, postponing
the (straight-forward) extension to unbounded polytopes to Appendix~\ref{app:computation}.
We set $q^\ast$ as a point in $V$ with maximal distance to $q$, $s_0:=\|p-\hat{p}\|^2$ 
and $s_1:=\|p-q^\ast\|^2$. For $s\in [s_0,s_1]$, we define $\Gamma_s$ to be a point with maximal
distance to $q$ in the set $V\cap B_s(p)$.

\begin{figure}[ht]
\centering
\includegraphics[width=6cm]{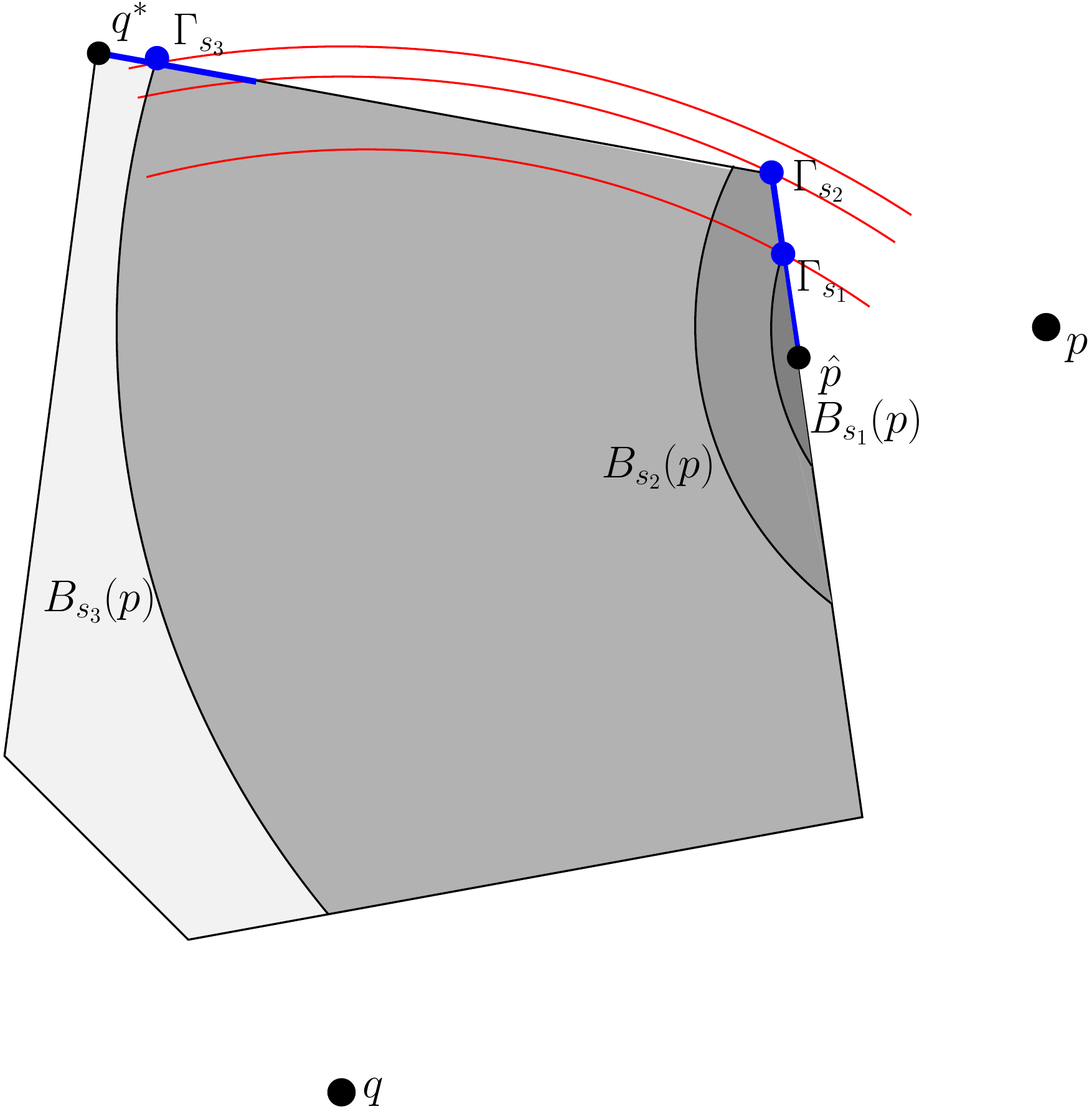}
\caption{Illustration of $\Gamma_s$ for the same $(V,p,q)$ as in Figure~\ref{fig:gamma_illustration}. 
The point $\Gamma_{s_2}$ remains the maximal point for a range of $s$-values,
and the maximizing ``path'' (blue) jumps at some value of~$s$.}
\label{fig:gamma_max_illustration}
\end{figure}

The maximizing problem is not as well-behaved as the minimization problem:
for general polytopes $V$, the point $\Gamma_s$ might not be unique, and there can be discontinuities in the image of $\Gamma$; 
see Figure~\ref{fig:gamma_max_illustration} for an example.
However, such problems are caused by local extrema on the boundary
of $B_s(p)\cap V$ which can be excluded for the polytopes of $S_2(P,q)$. Hence,
we can define the \emph{maximizing curve} $\Gamma:[s_0,s_1]\to \Gamma_s$ and infer (proof in Appendix):
\begin{lemma}
\label{lem:Maximal_continuous}
For $V\in S_2(P,q)$, the point $\Gamma_s$ is unique and the curve $\Gamma$ is continuous.
\end{lemma}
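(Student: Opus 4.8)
The plan is to establish both uniqueness of $\Gamma_s$ and continuity of $\Gamma$ by pinning down where a maximizing point of the distance-to-$q$ function on the convex set $V\cap B_s(p)$ can live, and then arguing that for the specially subdivided polytopes of $S_2(P,q)$ only one candidate survives. The set $C_s := V\cap B_s(p)$ is compact and convex, so a maximum of $x\mapsto\|x-q\|^2$ exists; since this function is strictly convex, every maximizer must lie on $\partial C_s$. The boundary $\partial C_s$ decomposes into (i) portions lying on faces of $V$ (edges and vertices, in the planar case) and (ii) a spherical patch on $\partial B_s(p)$. The key structural observation — mirroring the Bridge Lemma — is that the restriction of the distance-to-$q$ function to the sphere $\partial B_s(p)$ has its only local maximum at the antipode of $q$ relative to $p$, i.e. at the point of $\partial B_s(p)$ farthest from $q$ along the ray $\overrightarrow{qp}$; so a maximizer sitting in the relative interior of a spherical patch of $\partial C_s$ would have to be that single antipodal point, and otherwise the maximizer lies on $\partial V$.

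First I would make this dichotomy precise: $\Gamma_s$ is either the farthest-from-$q$ point of $\partial B_s(p)$ (when that point happens to lie in $V$) or it lies on $\partial V$, i.e. on one of the edges or vertices of $V$. On each edge $e$ of $V$, the distance-to-$q$ function is a convex quadratic in the arc-length parameter, so on the compact segment $e\cap C_s$ it attains its maximum at an endpoint — and because $e$ is \emph{non-problematic} (this is exactly the property enforced by the subdivision $S_2(P,q)$: the orthogonal projection of $q$ onto the supporting line of $e$ does not lie in the relative interior of $e$), the quadratic is monotone along $e$, so there is no interior local maximum to create a spurious competing maximizer. Thus all candidate maximizers are forced either to a vertex of $V$ or to the antipodal point on the sphere, and one checks that for $s\in[s_0,s_1]$ the latter is the relevant one exactly when it lies inside $V$; a short convexity argument (the farthest-point map to a convex body, and the fact that $V$ is convex with non-problematic edges) rules out ties and yields a unique $\Gamma_s$.

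For continuity, I would argue that $\Gamma_s$ depends continuously on $s$ because $C_s$ varies continuously (in Hausdorff distance) with $s$ and the strictly convex objective has a unique maximizer on each $C_s$: a standard maximum-theorem / compactness argument then gives continuity of the argmax. Concretely, if $s_n\to s$ and $\Gamma_{s_n}\to x$, then $x\in C_s$ and $\|x-q\|\ge\|y-q\|$ for every $y$ that is a limit of points $y_n\in C_{s_n}$ — and since every $y\in C_s$ is such a limit, $x$ is a maximizer, hence $x=\Gamma_s$ by uniqueness. The one place requiring care, and the main obstacle, is verifying that the subdivision really does kill \emph{all} the pathologies: one must confirm that the two cutting operations defining $S_2(P,q)$ (the initial cut of the cell containing $q$ along $\overline{pq}$, and the cuts along $\overline{q\hat q}$ for each problematic Voronoi edge) leave \emph{no} edge of any resulting polygon problematic — including the newly created cut-edges, which lie on lines through $q$ and hence have $q$ projecting to the line's point $q$ itself, off the segment — and that no resulting polygon $V$ is positioned so that $B_s(p)$ can carve a spherical patch out of $C_s$ whose interior contains a local maximum other than the antipode. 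Both of these are essentially the same geometric facts already used (and proved in the appendix) for Lemma~\ref{lem:boundary} and Theorem~\ref{thm:good_cover}, so I would invoke that boundary analysis; the remaining work is the routine maximum-theorem argument for continuity sketched above.
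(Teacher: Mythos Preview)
Your proposal lands in the right place but takes a detour with one genuine slip. The claim that ``all candidate maximizers are forced either to a vertex of $V$ or to the antipodal point on the sphere'' is not correct: the maximizer of $\|\cdot-q\|$ on $\partial C_s$ can perfectly well sit at a point where a boundary edge of $V$ meets $\partial B_s(p)$, which is neither a vertex of $V$ nor the antipode. Your edge analysis only tells you the maximum along $e\cap C_s$ is at an \emph{endpoint of that segment}, and those endpoints include circle--edge intersections. You do recover by deferring to Lemma~\ref{lem:boundary} at the end, and that is exactly what is needed --- but then the preceding case analysis is superfluous.

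The paper's route is shorter and worth comparing. It uses Lemma~\ref{lem:boundary} directly: two distinct furthest points on $V\cap B_s(p)$ would give two local maxima of the distance-to-$q$ function on its boundary, contradicting that lemma --- uniqueness in one line. It then notes (again via Lemma~\ref{lem:boundary}) that the distance to $q$ on $V$ has no local maximum except $q^\ast$, which lets one rerun the proof of Lemma~\ref{lem:gamma_monotone} to get $\|p-\Gamma_s\|^2=s$, and with uniqueness in hand the proof of Lemma~\ref{lem:continuous} goes through verbatim for $\Gamma$. Your Hausdorff/maximum-theorem argument for continuity is a legitimate alternative and arguably cleaner, since it bypasses the need to first establish $\|p-\Gamma_s\|^2=s$; the only thing to add is a word on boundedness of the sequence $(\Gamma_{s_n})$ so that accumulation points exist. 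What your approach loses is economy: you re-derive pieces of the boundary structure that Lemma~\ref{lem:boundary} already packages, and the attempted direct classification of maximizers is where the error creeps in.
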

The structure of the curve $\Gamma$ is similar to its minimizing counterpart $\gamma$. The Dimension Reduction
Lemma and the Face Lemma also hold for $\Gamma$, with identical proofs. We define the \emph{anti-bridge}
as the intersection of $V$ with the ray emanating from $p$ in the direction $-pq$ (see Figure~\ref{fig:anti_bridge}).
With that, we obtain the statement that the anti-bridge is part of $\Gamma$, and the rest of $\Gamma$ lies
on $\partial V$, with an analogue proof as for the bridge lemma. This leads to

\begin{figure}[ht]
\centering
\includegraphics[width=4cm]{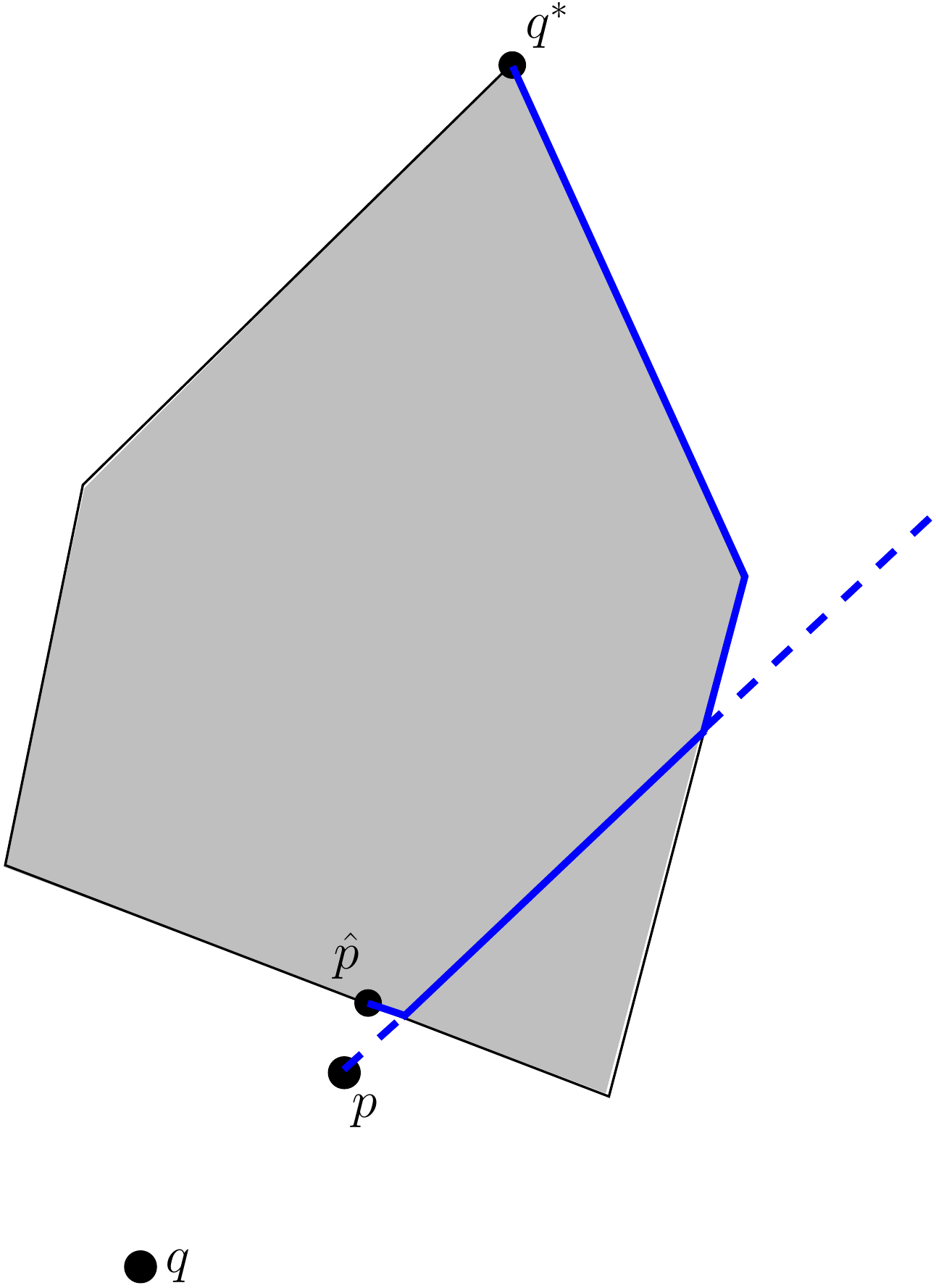}
\caption{Illustration of the maximizing path $\Gamma$ for a polygon with only one local minimum and one local maximum
on the boundary. The path goes through the interior along the anti-bridge.}
\label{fig:anti_bridge}
\end{figure}

\begin{theorem}[Structure Theorem, maximal version]
\label{thm:structure_max}
The maximizing path $\Gamma$ is a subset of the union of all anti-bridges over all faces of $V$. In particular,
it is a polygonal chain.
\end{theorem}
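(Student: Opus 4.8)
The plan is to mirror the argument structure used for the minimizing path (Face Lemma, Bridge Lemma, Structure Theorem), substituting the anti-bridge for the bridge throughout. By Lemma~\ref{lem:Maximal_continuous}, for polytopes $V\in S_2(P,q)$ the point $\Gamma_s$ is unique and $\Gamma$ is a continuous curve from $\hat p$ to $q^\ast$, so it suffices to show that every point on $\Gamma$ lies on the anti-bridge of some face of $V$; since $V$ has finitely many faces and each anti-bridge is a line segment, this immediately yields that $\Gamma$ is contained in a finite union of line segments, hence is a polygonal chain.

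First I would fix a point $x=\Gamma_s$ on the maximizing path and let $F$ be the unique face of $V$ in whose relative interior $x$ lies. By the Face Lemma (which the excerpt asserts holds verbatim for $\Gamma$), $x$ lies on the maximizing path of $(F,p',q')$, where $p',q'$ are the orthogonal projections of $p,q$ onto the affine hull of $F$ — here I invoke the Dimension Reduction Lemma (again stated to carry over) so that $F$ may be treated as full-dimensional in its own affine span. It then remains to show that a point lying in the \emph{relative interior} of a full-dimensional polytope $F$ and on its maximizing path must lie on the anti-bridge of $F$. This is the analogue of the second half of the Bridge Lemma: if $x$ is interior to $F$, then $x$ lies on the boundary of $F\cap B_s(p)$ but not on $\partial F$, so it lies in the relative interior of a spherical patch of $\partial B_s(p)$. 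Restricted to the sphere $\partial B_s(p)$, the distance-to-$q$ function has a unique local maximum, namely the point where the ray from $q$ through $p$ (extended past $p$) meets the sphere; if $x$ were not that point, one could move along the spherical patch to increase the distance to $q$, contradicting maximality of $x$. Hence $x$ is the intersection of that ray with $\partial B_s(p)$, which is exactly a point of the anti-bridge of $F$ (the anti-bridge being $F$ intersected with the ray from $p$ in direction $-pq$). Combined with the first-half analogue — every point of the anti-bridge of $F$ is realized as $\Gamma_s$ for the appropriate $s$, by the same one-line verification as in the Bridge Lemma — this shows $\Gamma\cap \mathrm{relint}(F)$ coincides with (a subset of) the anti-bridge of $F$.

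The main obstacle is the very point that Lemma~\ref{lem:Maximal_continuous} is invoked to exclude: for a general polytope the maximizing ``path'' can jump and $\Gamma_s$ can fail to be unique (Figure~\ref{fig:gamma_max_illustration}), so the argument above only goes through once we know $\Gamma$ is a genuine continuous curve. I would therefore be careful to state at the outset that the theorem is asserted for $V\in S_2(P,q)$ and to cite Lemma~\ref{lem:Maximal_continuous} explicitly before running the Face/anti-Bridge dissection; the remaining steps are then routine adaptations of the minimizing-path proofs, with the single sign change that the relevant extremum of the distance-to-$q$ function on the sphere $\partial B_s(p)$ is the maximum (the antipode, with respect to $p$, of the bridge point) rather than the minimum.
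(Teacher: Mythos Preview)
Your proposal is correct and follows essentially the same approach as the paper: the paper explicitly states that the Dimension Reduction Lemma and the Face Lemma carry over with identical proofs, that the anti-bridge plays the role of the bridge with an analogous Bridge-Lemma argument, and that the Structure Theorem then follows exactly as in the minimizing case. Your write-up is in fact more detailed than what the paper provides (the paper simply asserts the analogy and omits a formal proof), and your explicit caveat that the argument requires $V\in S_2(P,q)$ via Lemma~\ref{lem:Maximal_continuous} is a worthwhile clarification.
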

The theorem also infers a way to compute the entry curve of $\sigma$
constructing and traversing a graph obtained by anti-bridges. We omit further details which are analogous to the minimization case, and yield the same bound as in Theorem~\ref{thm:complexity}.

\section{\texorpdfstring{Barcode templates for $\infty$-critical bifiltrations}{Barcode templates for infinity-critical bifiltrations}}
We demonstrate in this section that representing a bifiltration via non-linear entry curves
does not prevent an efficient algorithmic treatment. We focus on the case of the
computation of \emph{barcode templates} as introduced by Lesnick and Wright~\cite{lw-rivet}.
The idea is as follows: restricting a bifiltration to a line of positive slope (called a \emph{slice})
gives a filtration in one parameter, and hence a persistence barcode. While we cannot associate a persistence barcode
to the bifiltration~\cite{cz-multi}, the collection of barcodes over all slices yields
a wealth of information about the bifiltration. The software library RIVET\footnote{\url{https://rivet.readthedocs.io/en/latest/index.html\#}}
provides a visualization of these sliced barcodes for finite simplicial bifiltrations. To speed up the visualization step,
RIVET precomputes all \emph{combinatorial barcodes}, that is, it clusters slices together on which the simplices enter the filtration
in the same order~-- it is well-known that the barcode combinatorially only depends on this order, not the concrete critical values.
The barcode template is then, roughly speaking, the collection of all these combinatorial barcodes.
Barcode templates have also been used for exact computations of the matching distance of bifiltrations~\cite{klo-exact,bk-asymptotic}.

All aforementioned approaches assume the bifiltration to be $1$-critical, which means in our notation that the active region
of every simplex $\sigma$
is the upper-right quadrant of a point $v_\sigma\in\R^2$.
We argue that this restriction is unnecessary: indeed, the combinatorial barcode is determined by the order in which the slice
intersects the entry curves of all simplices. For a finite set $M$ of points in $\mathbb{R}^2$, a non-vertical line partitions $M$ into $(M_\uparrow,M_\downarrow,M_{on})$,
denoting the points above the line, below the line, and on the line, respectively.

\begin{lemma}
Let $I$ denote the set of intersection points of pairs of entry curves. If two slices have the same partition of $I$, they have the same
combinatorial barcode.
\end{lemma}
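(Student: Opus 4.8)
Fix a slice $\ell$, i.e.\ a line of positive slope in the $(s,r)$-plane, and a simplex $\sigma$. Since the active region $R_\sigma$ is closed under $\leq$, its boundary $e_\sigma$ (the entry curve) is the graph of a non-increasing function, so $\ell$ meets $e_\sigma$ in exactly one point $x_\sigma(\ell)$ (as $s$ grows, the line value increases while the curve value decreases), and this is the point at which $\sigma$ enters the one-parameter filtration obtained by restricting the bifiltration to $\ell$. As recalled in the text, the combinatorial barcode of $\ell$ depends only on the order — as a total preorder, allowing simultaneous entries — in which the points $x_\sigma(\ell)$ appear along $\ell$. Hence it suffices to prove: if $\ell_1$ and $\ell_2$ have the same partition of $I$ (which we may take to be finite, by genericity of the input), then for every pair $\sigma\neq\tau$ the entry points $x_\sigma$ and $x_\tau$ occur in the same order along $\ell_1$ and along $\ell_2$, where ``same order'' also covers the tie $x_\sigma=x_\tau$.

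\textbf{Reducing the partition condition to a path of slices.} Apply the standard point--line duality: the slice $r=ms+b$ corresponds to the point $(m,b)$ of a dual plane, and a point $P=(s_P,r_P)$ to the line $D_P=\{(m,b)\mid b=-s_P\,m+r_P\}$; then $P$ lies above, on, or below $\ell$ precisely when $(m,b)$ lies below, on, or above $D_P$. Therefore two slices induce the same partition of $I$ if and only if their dual points lie in the same face of the line arrangement $\{D_P\mid P\in I\}$. Faces of a line arrangement are convex, and $\{m>0\}$ is convex, so the dual segment joining $\ell_1$ to $\ell_2$ stays inside one face and inside $\{m>0\}$; it yields a continuous family of positive-slope slices $(\ell_t)_{t\in[0,1]}$ from $\ell_1$ to $\ell_2$, all having the same partition of $I$ as $\ell_1$.

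\textbf{Invariance of the entry order along the path.} Fix $\sigma\neq\tau$. The entry point $x_\sigma(\ell_t)$ is the unique intersection of the fixed curve $e_\sigma$ with the moving line $\ell_t$, and because that crossing is monotone (positive slope against a non-increasing graph) it depends continuously on $t$; the same holds for $\tau$. Parameterize each slice by its $s$-coordinate $s(\cdot)$; then $\sigma$ enters before $\tau$ along $\ell_t$ iff $s(x_\sigma(\ell_t))<s(x_\tau(\ell_t))$, a quantity continuous in $t$. If its sign changed between $t=0$ and $t=1$, there would be $t^\ast$ with $s(x_\sigma)=s(x_\tau)$; two points of the non-vertical line $\ell_{t^\ast}$ with equal $s$-coordinate coincide, so $x_\sigma(\ell_{t^\ast})=x_\tau(\ell_{t^\ast})=:P^\ast\in e_\sigma\cap e_\tau\subseteq I$. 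Then $P^\ast$ lies on $\ell_{t^\ast}$, hence $P^\ast\in I_{on}$ for $\ell_{t^\ast}$ and therefore for $\ell_1$; since the partition of $I$ is constant along the path, $P^\ast$ lies on every $\ell_t$, so actually $x_\sigma(\ell_t)=x_\tau(\ell_t)=P^\ast$ for all $t$, meaning $\sigma$ and $\tau$ enter simultaneously along every slice of the family, in particular along $\ell_1$ and $\ell_2$. In either case the relative order of $\sigma$ and $\tau$ agrees on $\ell_1$ and $\ell_2$; as this holds for every pair, $\ell_1$ and $\ell_2$ have the same combinatorial barcode.

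\textbf{Main obstacle.} The duality step is routine; the point that needs care is the claim that a flip in the entry order of two simplices forces the slice to pass through a point of $I$. This relies on the monotonicity of entry curves — every positive-slope slice meets each $e_\sigma$ in a single point, varying continuously with the slice — together with the observation that coincident entry points lie on both entry curves at once. The bookkeeping for slices through points of $I$ (simultaneous entries) is handled as above; alternatively one restricts to generic slices and argues by perturbation.
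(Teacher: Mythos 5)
Your proof is correct and follows essentially the same route as the paper's: the paper argues (contrapositively) that a change in the entry order of two simplices forces any continuous deformation of one slice into the other to cross a point of $I$, hence a change of partition. You merely fill in the details the paper leaves implicit — constructing the continuous family of slices via convexity of the faces of the dual arrangement, verifying continuity of the entry points, and handling the tie case where a slice passes through a point of $I$ — which is a welcome tightening but not a different argument.
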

\begin{proof}
Let $\ell_1$, $\ell_2$ be slices with different barcodes. This means, that there is at least one pair of simplices $(\sigma,\tau)$
for which the entry curves are intersected in a different order. When continuously transforming $\ell_1$ into $\ell_2$, we therefore
have to cross an intersection point of these entry curves, and hence the partition changes.
\end{proof}

Using standard point-line duality~\cite[Ch.~8]{dutch-book} which is known to preserve above/below orders of points and lines,
we obtain at once:

\begin{corollary}
Let $\mathrm{dual}(I)$ be the line arrangement obtained by the dual of all points in $I$. Every region of this arrangement
is dual to a set of slices with the same combinatorial barcode.
\end{corollary}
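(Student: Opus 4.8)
The plan is to deduce the corollary directly from the preceding lemma by transporting its hypothesis through the point--line duality quoted above; the statement is essentially a reformulation of the lemma in the dual plane, so there is almost nothing to prove beyond careful bookkeeping. First I would fix an explicit duality, say sending a point $(a,b)$ to the line $y=ax-b$ and a non-vertical line $y=cx+d$ to the point $(c,-d)$. The only feature of this map that is needed is the one recalled from \cite[Ch.~8]{dutch-book}: a point $z$ lies above (resp.\ on, resp.\ below) a line $\ell$ if and only if the dual point $\ell^{\ast}$ lies above (resp.\ on, resp.\ below) the dual line $z^{\ast}$.

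Next I would translate the data of the previous lemma through this correspondence. A slice is a line of positive slope, so its dual is a point $\ell^{\ast}$, and dualizing the finitely many intersection points in $I$ produces exactly the line arrangement $\mathrm{dual}(I)$. By the order-preservation property, the partition $(M_\uparrow,M_\downarrow,M_{\mathrm{on}})$ that a slice $\ell$ induces on $I$ is precisely the sign vector recording, for each line $z^{\ast}$ of $\mathrm{dual}(I)$, whether $\ell^{\ast}$ lies above, on, or below $z^{\ast}$. Two points of the plane carry the same such sign vector if and only if they lie in the same face of the arrangement $\mathrm{dual}(I)$. Hence, if the duals $\ell_1^{\ast},\ell_2^{\ast}$ of two slices lie in a common region of $\mathrm{dual}(I)$, then $\ell_1$ and $\ell_2$ induce the same partition of $I$, and the preceding lemma yields that they have the same combinatorial barcode. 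Since every point of a given region of $\mathrm{dual}(I)$ is the dual of a unique non-vertical line, and the positive-slope ones among these are exactly the slices, this is precisely the claim.

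I do not expect a genuine obstacle here; the substance is purely combinatorial--geometric bookkeeping. The only points that merit a word of care are: (i) that $I$ is finite, which holds because each entry curve is a finite union of line segments and parabolic arcs, so any two of them meet in finitely many points, and one may additionally assume genericity so the entry curves are pairwise distinct; (ii) being explicit that ``region'' should be read as an arbitrary face of the arrangement (cell, edge, or vertex), not merely a full-dimensional cell, so that slices passing through points of $I$ --- the case $M_{\mathrm{on}}\neq\emptyset$ --- are covered as well; and (iii) noting that restricting attention to positive-slope slices only confines the relevant dual points $\ell^{\ast}$ to a sub-region of the dual plane and has no effect on the argument.
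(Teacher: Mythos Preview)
Your proposal is correct and follows exactly the route the paper takes: the paper simply says the corollary is obtained ``at once'' from the preceding lemma via the order-preservation property of standard point--line duality, and your write-up is a faithful (and more detailed) unpacking of precisely that argument. Your additional remarks on finiteness of $I$, reading ``region'' as an arbitrary face, and the positive-slope restriction are all reasonable clarifications that the paper leaves implicit.
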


We mention that this result generalizes the well-studied $1$-critical case: 
the entry curve of every $\sigma$ consists of a vertical and a horizontal ray emanating
from a point $v_\sigma$ in this case, and the curves of $\sigma$ and $\tau$ intersect in the join of $v_\sigma$ and $v_\tau$.
Hence $I$ is the union of all pairwise joins of the critical values of all vertices.

Computationally, the only complication consists in computing the intersection points of the entry curves, which have a more
complicated structure than in the $1$-critical case. However, if the curves are semi-algebraic
(and of small degree), as in the case studied in this paper, computing these intersection
points is feasible and efficient software is available~\cite{bhk-generic}.

\section{Conclusion}
Localized persistence modules are, on the one hand, a suitable set of examples 
for further algorithmic work in the pipeline of $\infty$-critical multi-parameter
persistence. On the other hand, because one-dimensional sub-modules
of the module have been considered in the context of applications,
we hope that the module itself can be of use in applied contexts.

There are many natural follow-up questions for our work: most directly,
the only obstacle to extend our relative approach in higher dimensions
is the subdivision scheme which is currently only proved for $\R^2$.
Another future task is the implementation of our approaches beyond
the absolute case in the plane. Such implementations are well in-reach,
given that all underlying geometric primitives are provided
by the \textsc{Cgal} library.

One aspect we have not touched upon is stability. 
A natural assumption is that if $P$ and $P'$ as well as the centers $q$ and $q'$ 
are $\eps$-close in the $L_\infty$-distance, then the localized bifiltrations
for $(P,q)$ and $(P',q')$ are $\eps$-interleaved~\cite{ccsggo-proximityofpers,l-theoryofinterleaving}. 
This claim is true (and straight-forward to prove) if we re-define the notion $B_\alpha(x)$
to be ball of radius $\alpha$ around $x$ (instead of $\sqrt{\alpha}$
as in this work). While our non-standard notion for balls prevents us from 
making this claim, we point out that it simplifies the description of the
entry curves. In the standard notion, the parameterization 
in (\ref{eqn:param1}) in Section~\ref{sec:absolute} 
would involve square roots and therefore does not yield parabolic arcs. We decided
to favor the simplicity of the entry curves in this work.

We hope that our initial efforts will result in an algorithmic
treatment of multi-parameter persistence that can cope with $\infty$-critical
filtrations with comparable efficiency as in the $1$-parameter case.
A natural next question is whether minimal presentations
of $\infty$-critical filtrations can be computed efficiently,
generalizing the recent approaches from~\cite{lw-computing,fkr-compression}.

Localizing the union-of-balls filtration in a center yields more fine-grained
information about the point set in the vicinity of the center. We pose the
question whether and how this information can be leveraged in the numerous
application domains of topological data analysis. We speculate that even in
situations where the data set does not contain a canonical center location,
considering a sample of centers and analyzing the ensemble of localized
bifiltrations yields a more discriminative topological proxy then
the union-of-balls filtration.

\bibliography{bib}

\begin{appendix}

\section{Missing proofs}
\label{app:missing_proofs}

\subparagraph{Proof of Lemma~\ref{lem:gamma_monotone}.}
Note first that the statement is true for $s=s_1$ by definition.
Also by definition, $\gamma_s\in B_s(p)$, so $\|p-\gamma_s\|^2\leq s$. Assume for a contradiction that $\|p-\gamma_s\|^2<s$
for some $s\in [s_0,s_1)$. This implies that $\gamma_s$ lies in the interior of $B_s(p)$ so
there exists an $\eps>0$ small enough, such that $B_\eps(\gamma_s)\subset B_s(p)$, and therefore also
$B_\eps(\gamma_s)\cap V \subset B_s(p)\cap V$. However, the distance function to $q$,
restricted to $V$ has no local minimum except at $\hat{q}$ because of convexity,
and $\gamma_s\neq\hat{q}$ for $s<s_1$. Therefore, there exists a point in $B_\eps(\gamma_s)\cap V$ that is closer to $q$
than $\gamma_s$, contradicting the definition of $\gamma_s$.

\subparagraph{Proof of Lemma~\ref{lem:continuous}.}
For continuity, consider a sequence $(s_n)_{n\in\N}$ that converges to $s$.
The sequence $(\gamma_{s_n})$ is bounded because every $\gamma_s$ lies in $B_{s_1}(p)$.
It hence suffices to show that any accumulation point of the sequence $(\gamma_{s_n})$
equals $\gamma_s$.

So, consider an accumulation point $x\in\R^d$ and assume for a contradiction that $x\neq\gamma_s$. 
Let $(t_n)$ denote a subsequence
of $(s_n)$ such that $(\gamma_{t_n})$ converges to $x$. Since all $\gamma_{t_n}$
lie in $V$, and $V$ is closed, it follows that $x\in V$. Moreover, since $\|p-\gamma_{t_n}\|^2=t_n$
by the previous lemma and because the squared distance function to $p$ is continuous, we
have that $\|p-x\|^2=s$. Hence, $x\in V\cap B_s(p)$. Since $\gamma_s\in V\cap B_s(p)$ as well,
by convexity, the line segment $x\gamma_s$ lies in that set too. Notice that the squared distance function
to $p$ restricted to segments has no local maximum, thus the squared distance to $p$ is strictly smaller than
$s$ throughout the interior of $x\gamma_s$.

Since the closest point to $q$ in $V\cap B_s(p)$ is unique and $x\neq\gamma_s$ is assumed, we have
that $\|q-x\|^2>\|q-\gamma_s\|^2$. The continuity of the distance function implies that there exist
neighborhoods $U_x$ around $x$ and $U_{\gamma_s}$ around $\gamma_s$ such that the distance from $q$ to
any point in $U_x$ is strictly greater than the distance from $q$ to any point in $U_{\gamma_s}$.
Pick a point $y\in U_{\gamma_s}\cap x\gamma_s$. As said before, its squared distance to $p$ is $s'<s$.
On the other hand, since the $\gamma_{t_n}$ converge to $x$ and $t_n$ converge to $s$, we can find
some $m\in\N$ such that $s'<t_m$ and $\gamma_{t_m}$ lies in $U_x$. By definition, $\gamma_{t_m}$
is the closest point to $q$ in $V\cap B_{t_m}(p)$. But $y$ lies in $V\cap B_{s'}(p)\subseteq V\cap B_{t_m}(p)$
as well, and its distance to $q$ is smaller, a contradiction. This proves the the continuity.
Injectivity follows directly from the previous lemma.

\subparagraph{Proof of Lemma~\ref{lem:dimension_reduction}.}
By the Pythagorean Theorem, the point minimizing the distance to $q$ also minimizes the distance to $q'$,
implying that the minimizing paths of $(V,p,q)$ and $(V,p,q')$ are the same. Furthermore, with $\delta:=\sqrt{\|p-p'\|}$,
we have that $V\cap B_s(p)=V\cap B_{s-\delta}(p')$. Hence, the minimizing paths of $(V,p,q')$ and $(V,p',q')$
differ only in a parameter shift by $\delta$.

\subparagraph{Proof of Theorem \ref{thm:rel_nerve}.}
We use the notation of~\cite{bkrr-unified}. According to their Definition 1.2, in the category
of covered spaces, there is a morphism $(A,U_A)\to (X,U_X)$. This gives rise to the following
commutative diagram of spaces

\[ \begin{tikzcd}
  A \arrow[hookrightarrow]{d} & \arrow[swap]{l}{\rho^{(A)}_S} \mathrm{Blowup}(U_A) \arrow{r}{\rho^{(A)}_N}\arrow[hookrightarrow]{d} & \mathrm{Nrv} (U_A)\arrow[hookrightarrow]{d} \\
X & \arrow[swap]{l}{\rho^{(X)}_S} \mathrm{Blowup}(U_X) \arrow{r}{\rho^{(X)}_N} & \mathrm{Nrv} (U_X) \\%
\end{tikzcd}
\]
where $\mathrm{Blowup}(X)$ is the blowup complex of the cover, which can be thought of as a product space of the space $X$ and the nerve $\mathrm{Nrv}(U_X)$,
and $\rho_S$, $\rho_N$ are the canonical projection maps. 
Because the vertical arrows are inclusions, it follows that $\rho^{(A)}_S$ is the restriction of $\rho^{(X)}_S$ to $A$, and the same holds
for the $\rho_N$ maps.
Since the covers $U_A$ and $U_X$ are both good, the maps $\rho_N$ are homotopy equivalences
(Thm 5.9, part 2a in~\cite{bkrr-unified}). Moreover, our covers are ``sufficiently nice'' to guarantee all prerequisites of Corollary 5.16 in~\cite{bkrr-unified},
certifying that also the maps $\rho_S$ are homotopy equivalences. Finally, taking the homotopy inverse $\tau^{(X)}_S$ to $\rho^{(X)}_S$, we can observe
at once that its restriction to $A$ is a homotopy inverse to $\rho^{(A)}_S$. The same holds for the homotopy inverse of $\rho^{(X)}_N$, proving the first part
of the theorem. The functoriality follows from exactly the same diagram and argumentation, using the morphisms between the covered spaces $(X,U_X)\to (X',U_{X'})$
and $(A,U_A)\to (A',U_{A'})$, respectively.

\subparagraph{Proof of Theorem~\ref{thm:good_cover}.}
The major technical result required is that for any polygon in $S_2(P,q)$, no
additional local extrema appear when intersecting with $L_s$.

\begin{lemma}
\label{lem:boundary}
For each polygon $V\in S_2(P,q)$, if $V\cap L_s\neq\emptyset$, the distance function to $q$ restricted to the boundary of $V\cap L_s$ 
contains only one local minimum. Consequently, it contains also only one local maximum, and the set
\[\partial(V\cap L_s)\setminus\openball{r}{q}\]
is empty or connected for every $r\geq 0$.
\end{lemma}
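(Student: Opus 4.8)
The plan is to reduce the statement to a fact about convex plane curves. First I would note that, since each $V\in S_2(P,q)$ lies inside a single Voronoi cell $\Vor(p)$, we have $V\cap L_s=V\cap B_s(p)$: for $x\in V$ the nearest site is $p$, so $x\in L_s$ iff $\|x-p\|^2\le s$. Next I would check that every $V\in S_2(P,q)$ is a \emph{convex} polygon — this is where the subdivision pays off. The initial cut splits $\Vor(p_0)\ni q$ into two convex pieces that have $q$ on their common boundary edge, and every subsequent cut along a segment $\overline{q\hat{q}}$ (with $\hat{q}$ the orthogonal projection of $q$ onto a problematic edge) is a full chord of an already-convex piece: the segment meets that piece transversally at $\hat{q}$ and leaves it again, precisely because $q$ is never interior to a cover piece. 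Hence $C:=V\cap L_s=V\cap B_s(p)$ is convex, and in the non-degenerate case where $C$ is $2$-dimensional, $\partial C$ is a simple closed convex curve.

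The core claim I would then prove is: for a $2$-dimensional convex body $C\subseteq\R^2$ and any point $q\in\R^2$, the function $x\mapsto\|x-q\|^2$ restricted to $\partial C$ has a unique local minimum. The argument is that all of its sublevel sets are connected. For $\rho\ge 0$ the sublevel set equals $\partial C\cap\overline{B}$, where $\overline{B}$ is the closed disk of squared radius $\rho$ around $q$; the set $C\cap\overline{B}$ is convex, its boundary is a Jordan curve, and that curve splits into the arc $C\cap\partial B$ — connected, being the intersection of a circle with the convex set $C$ — together with its complementary arc, which lies on $\partial C$ and is exactly $\partial C\cap\overline{B}$. So the sublevel set is connected (an arc, or empty, or all of $\partial C$). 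Finally, a non-constant continuous function on a circle whose sublevel sets are all connected cannot have two local minima: between two of them it would attain a local maximum on each of the two connecting arcs, and choosing $\rho$ strictly below the smaller of those two maximal values but above both minimal values disconnects the corresponding sublevel set.

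From uniqueness of the local minimum the remaining assertions come for free: a non-constant continuous function on a circle with exactly one local minimum has exactly one local maximum, hence is unimodal — increasing along one of the two arcs from minimum to maximum and decreasing along the other. The superlevel set $\{x\in\partial C:\|x-q\|^2\ge r\}$, which is precisely $\partial(V\cap L_s)\setminus\openball{r}{q}$, is then a single arc around the maximum, or empty when $\partial C\subseteq\openball{r}{q}$; either way it is connected. The degenerate cases ($V\cap L_s$ empty is excluded; $V\cap L_s$ a point or segment is trivial; $q\in\partial(V\cap L_s)$, where the minimum is just $q$) need a separate line each but present no real difficulty.

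The main obstacle I anticipate is not the convex-curve fact but the reduction to it — pinning down which properties of $S_2(P,q)$ are actually used: (i) that every subdivision cut is a genuine chord, so that all pieces stay convex, which hinges on the initial cut keeping $q$ off the interior of every piece; and (ii) ruling out the configuration flagged in the text where $B_r(q)$ is internally tangent to $B_s(p)$ and $(V\cap L_s)\setminus\openball{r}{q}$ could pinch off, again a consequence of the initial cut. An alternative, more hands-on route would traverse $\partial(V\cap L_s)$ directly, using that on each edge of $V$ the squared distance to $q$ is monotone — save for the single edge created by the initial cut, whose only stationary point is $q$ itself — and that on each circular arc of $\partial B_s(p)$ it has at most one interior local extremum (on the full circle it has exactly one local minimum and one local maximum, at the intersections with the line through $p$ and $q$). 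Assembling these pieces around the boundary also works, but the case analysis at the junctions, especially at vertices of $V$, is what makes it the bulkier option, so I would favor the convexity argument.
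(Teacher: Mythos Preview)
Your reduction to convexity does not go through: the ``core claim'' that for a planar convex body $C$ and an arbitrary point $q$ the function $x\mapsto\|x-q\|^2$ has a unique local minimum on $\partial C$ is false, even under the additional hypothesis $q\notin C$ that you implicitly use. Take $C=[1,\tfrac32]\times[-10,10]$ and $q=(0,0)$: then $(1,0)$ and $(\tfrac32,0)$ are both local minima on $\partial C$. The concrete mistake is the assertion that ``$C\cap\partial B$ is connected, being the intersection of a circle with the convex set $C$''~--- a circle intersected with a convex set need not be connected (in the same example, with $\overline{B}$ the disk of radius $2$ about $q$, the set $C\cap\partial B$ consists of two arcs, and accordingly the sublevel set $\partial C\cap\overline{B}$ consists of two disjoint segments on the left and right edges). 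So convexity of $V$ alone, even combined with $q\notin\mathrm{int}(V)$, cannot carry the argument; one genuinely needs that the subdivision has eliminated problematic edges on the $q$-side and performed the initial cut along $\overline{pq}$.

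The paper's proof is in fact the ``hands-on route'' you sketched at the end: it assumes a local minimum $x$ on $\partial(V\cap B_s(p))$ and does a three-case analysis on whether $x$ lies in the interior of a straight edge of $\partial V$, at a vertex of $\partial(V\cap B_s(p))$, or in the interior of a circular arc of $\partial B_s(p)$, in each case concluding $x=\gamma_s$. The first case uses that the cut at a problematic edge is made on the $q$-side, so if $x$ is still interior to a straight portion of $\partial V$ then $V$ must be the far polygon and $B_r(q)$ touches $V$ only at $x$; the third case uses the initial cut together with the Bridge Lemma. Your claim in the alternative route that ``on each edge of $V$ the squared distance to $q$ is monotone'' is therefore also slightly off~--- far-side edges can still contain the orthogonal projection of $q$ in their interior, but then that point is forced to be $\gamma_s$.
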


With that statement, it is simple to prove that the cover induced by $S_2(P,q)$ is good for the pair $(L_s,L_s\setminus\openball{r}{q})$:
First of all, it is clearly good for $L_s$ since all induced cover elements are convex.
For $L_s\setminus\openball{r}{q}$, the intersection of two cover elements is a subset of the intersection of two polygons in $S_2(P,q)$.
By our construction, no such edge has a local minimum of the distance function in its interior, implying that removing $\openball{r}{q}$
leaves a connected line segment as intersection, implying that the intersection of two cover elements is contractible.
Higher-order intersections happen only in points which is good too.

We are left to argue that each cover element is contractible. So, fix $V\in S_2(P,q)$ and consider, for $(s,r)$ fixed,
the induced cover element $V\cap L_s\setminus\openball{r}{q}$.  
Every point in this set is path-connected to a point of $\partial(V\cap L_s)$ that is not in $\openball{r}{q}$,
just by moving in the direction opposite to $q$ until the boundary is hit. Lemma~\ref{lem:boundary} certifies
that all these boundary points are connected to each other, implying that the cover element is connected.

It remains to show that every cover element is simply connected. Indeed, $V\cap L_s$ is simply connected,
and the only possibility to change this would be if $\openball{r}{q}$ would entirely lie with $V$.
However, the initial cut in the construction of $S_2(P,q)$ makes sure that $q$ is not in
the interior of any polygon. This completes the proof of Theorem~\ref{thm:good_cover};
it is only left to show Lemma~\ref{lem:boundary}.

\begin{proof}[Proof of Lemma ~\ref{lem:boundary}]
We first observe that $V\cap L_s=V\cap B_s(p)$ for a (unique) site $p$
because $V$ is subset of a Voronoi region of some site $p$. Hence, $\partial(V\cap B_s(p))$ consists
of line segments which are part of $\partial V$ and circular arcs that belong to $B_s(p)$. Clearly,
the value $\gamma_s$, which is defined as the (unique) point in minimal distance to $q$, lies on the boundary
(otherwise, $q$ would be inside $V$ which is excluded by the construction of $S_2(P,q)$) and is a local minimum.
We argue that there is no other local minimum.

Assume that $x$ is a local minimum on $\partial(V\cap B_s(p))$.
Let $r:=\|q-x\|^2$. There are three cases for the location of $x$ which all lead to the conclusion that $x=\gamma_s$:
\begin{itemize}
\item $x$ lies in the interior of a line segment $e$ of $\partial V$. This implies that $x$ is the projection of $q$ on the Voronoi edge $e'$
that contains $e$, and this edge is problematic by definition. By construction, a cut is performed at $x$, for the polygon adjacent to $V$
that is closer to $q$. Because $x$ lies in the interior of a line segment of $\partial V$ by assumption, $V$ must be the polygon
further away from $q$. But that implies that $x$ is the only intersection point of $B_r(q)$ and $V$, and so, $x=\gamma_s$.
\item $x$ lies at a vertex of $\partial (V\cap B_s(p))$. There are two sub-cases: if $x$ is a point where two line segments meet (with an angle
of less than $\pi$), then $B_r(q)$ can only touch $V$ in $x$ if $x$ is the only point in the intersection of $V$ and $B_r(q)$.
This means again that $x=\gamma_s$. If $x$ is a point where a line segment and a circular arc of $B_s(p)$ meet, the angle
between the line segment and the tangent of the circle in $x$ is necessarily smaller than $\pi$, and the same argument applies.
\item The last case is that $x$ lies in the interior of a circular arc of $B_s(p)$.
Note that the line $pq$ intersects $B_s(p)$ in two points, and the
only possible location of a local minimum is the intersection point closer to $q$. Now, if $q$ lies inside $B_s(p)$, this
implies that $q$ is closer to $p$ than $x$, and since $x$ lies in the Voronoi region of $p$, so does $q$. It follows
that $V$ arose from the Voronoi region that contains $q$, and the initial cut of $S_2(P,q)$ has cut that region using the
line $pq$. This is a contradiction to the assumption that $x$ lies in the interior of a circular arc.
Hence, $q$ lies on or outside $B_s(p)$, and thus $x$ lies on the line segment $\overline{pq}$. Therefore, $x$ lies on the
bridge of $(V,p,q)$, and the Bridge lemma implies that $x=\gamma_s$.
\end{itemize}

This proves the first part of the Lemma. Since the $\partial(V\cap B_s(p))$ is homeomorphic to a circle, two local maxima
would imply two local minima, contradicting the first part. Finally, since $\partial(V\cap L_s)\setminus\openball{r}{q}$
is simply a superlevel set of the distance function, two components would imply the existence of two local maxima,
one in each component.

\end{proof}

\subparagraph{Proof of Lemma~\ref{lem:Maximal_continuous}.}
The following two properties of $V\in S_2(P,q)$ are direct consequences of Lemma~\ref{lem:boundary}.

\begin{lemma}
The furthest point to $q$ on $V\cap B_s(p)$ is unique.
\end{lemma}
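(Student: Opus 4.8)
The plan is to transfer the problem to the boundary of $V\cap B_s(p)$, which is precisely where Lemma~\ref{lem:boundary} gives complete control. Write $C:=V\cap B_s(p)$, which we take to be non-empty (otherwise there is nothing to prove). Since $V\in S_2(P,q)$ is a two-dimensional convex polygon and $B_s(p)$ is a ball, $C$ is a compact convex set. For $s$ in the relevant range it is two-dimensional; the only other possibility is that $C$ is a single point (which can occur only at $s=s_0$), in which case uniqueness is trivial. So from now on I assume $C$ has non-empty interior.

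First I would observe that the squared distance $x\mapsto\|x-q\|^2$ attains its maximum over $C$ only on $\partial C$. The function is strictly convex, so its restriction to any line segment has no local maximum in the relative interior of the segment; hence an interior point $x_0$ of $C$ cannot be a maximizer, since any line through $x_0$ meets $\partial C$ in two points $y_1,y_2$ with $x_0$ strictly between them, and strict convexity gives $\|x_0-q\|^2<\max(\|y_1-q\|^2,\|y_2-q\|^2)$. Consequently it suffices to show that $\|\cdot-q\|^2$ restricted to $\partial C$ has a unique maximizer.

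This is now immediate from Lemma~\ref{lem:boundary}: for $V\in S_2(P,q)$, the distance function to $q$ restricted to $\partial C=\partial(V\cap B_s(p))$ has exactly one local minimum, hence exactly one local maximum. As recorded in the proof of that lemma, $\partial C$ is homeomorphic to a circle, so a continuous function on it attains its global maximum at a local maximum; with only one local maximum present, this global maximum is attained at a single point. Combining the two steps, the point of $C$ furthest from $q$ is exactly that boundary point, and in particular it is unique.

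I do not expect any genuine obstacle: all of the geometric content is already carried by Lemma~\ref{lem:boundary}, and the remaining steps are the standard convexity reduction from a convex body to its boundary together with the trivial degenerate case. The single point that deserves a line of care is that $\partial C$ is a bona fide topological circle along which the distance to $q$ is non-constant near its extrema --- but degenerate configurations violating this (for instance $B_s(p)$ sitting inside $V$ with $q$ at its center) are exactly those already ruled out inside the proof of Lemma~\ref{lem:boundary} by the initial cut of $S_2(P,q)$, so nothing new has to be established.
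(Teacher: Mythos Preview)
Your proof is correct and follows essentially the same route as the paper: reduce to the boundary (the paper states this in one clause, you justify it via strict convexity of the squared distance), and then observe that two distinct furthest points would give two local maxima on $\partial(V\cap B_s(p))$, contradicting Lemma~\ref{lem:boundary}. Your version is more explicit about the degenerate case and the topological circle argument, but the core idea is identical.
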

\begin{proof}
Note first that every furthest point must necessarily be on the boundary, and is also a furthest point on the boundary.
If there were two distinct furthest points, they are in particular local maxima of the distance function to $q$
on the boundary, which contradicts Lemma~\ref{lem:boundary}.
\end{proof}

\begin{lemma}
The distance to $q$ has no local maximum except at $q^\ast$.
\end{lemma}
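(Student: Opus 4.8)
The plan is to obtain the statement as an immediate corollary of Lemma~\ref{lem:boundary} together with strict convexity of the squared Euclidean distance. First I would record that $f\colon x\mapsto\|x-q\|^2$ is strictly convex, so that $f|_V$ can have no local maximum in the interior of the (full-dimensional, convex, bounded) polygon $V$: through any interior point one can draw a short segment inside $V$, and strict convexity forces $f$ to be strictly larger at one of its endpoints. Since $V$ is compact, $f|_V$ does attain a global maximum; it is unique by strict convexity, and by the previous remark it lies on $\partial V$; this unique maximizer is precisely $q^\ast$. Hence it suffices to show that $f$ restricted to $\partial V$ has no local maximum other than $q^\ast$, because every local maximum of $f|_V$ lies on $\partial V$ and is in particular a local maximum of $f|_{\partial V}$.

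To analyse $f|_{\partial V}$ I would invoke Lemma~\ref{lem:boundary}. Let $p$ be the (unique) site whose Voronoi region contains $V$, so that $V\cap L_s=V\cap B_s(p)$ for every $s$. Since $V$ is bounded, choosing $s$ large enough gives $B_s(p)\supseteq V$, whence $V\cap L_s=V$ and $\partial(V\cap L_s)=\partial V$. Lemma~\ref{lem:boundary} then states that the distance to $q$ restricted to $\partial V$ has exactly one local minimum, hence exactly one local maximum. That single local maximum is the global maximum $q^\ast$, and combining this with the reduction of the first paragraph finishes the proof.

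I do not expect any genuine obstacle here: the content is carried entirely by Lemma~\ref{lem:boundary}, and the only points needing a line of justification are that a local maximum of the squared distance on the convex body $V$ must sit on $\partial V$ (this is where strict convexity of the squared Euclidean distance enters) and that Lemma~\ref{lem:boundary} may be applied with $V\cap L_s=V$ for $s$ sufficiently large (boundedness of $V$). Alternatively, if one prefers to avoid the limiting value of $s$, one can re-run the three-case argument from the proof of Lemma~\ref{lem:boundary} verbatim with $B_s(p)$ replaced by all of $\R^d$, i.e.\ directly on $\partial V$; the case distinction (interior of a Voronoi edge, a vertex, or — now vacuously — a circular arc) goes through unchanged and shows that the unique boundary local maximum is $q^\ast$.
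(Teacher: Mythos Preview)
Your approach is essentially the paper's: push any local maximum of $f|_V$ to $\partial V$, take $s$ large so that $\partial(V\cap L_s)=\partial V$, and invoke Lemma~\ref{lem:boundary} to conclude there is only one local maximum on $\partial V$. One small slip to fix: uniqueness of the global maximizer does \emph{not} follow from strict convexity of $f$ (consider $x\mapsto x^2$ on $[-1,1]$, which attains its maximum at both endpoints); uniqueness is instead a consequence of the single-local-maximum clause of Lemma~\ref{lem:boundary}, which you invoke anyway, so the argument stands once that sentence is dropped or rephrased.
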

\begin{proof}
Assume that there is a second local maximum. Again, both that maximum and $q^\ast$ are also local maxima on the boundary.
For $s$ large enough, this implies that the two points are also local maxima of $V\cap B_s(p)$, again
contradicting Lemma~\ref{lem:boundary}.
\end{proof}

Re-inspecting the proof of Lemma~\ref{lem:gamma_monotone}, the second lemma lets us invoke
the analogous argument to prove that $\|p-\Gamma_s\|^2=s$ for all $s$ such that $\|p-\hat{p}\|^2\leq s\leq \|p-q^\ast\|^2$.
The first lemma guarantees that the proof of Lemma~\ref{lem:continuous} is valid also for the maximizing path $\Gamma$.

\section{Computational details}
\label{app:computation}
\subparagraph{Building bridges.}
We have given a point set $P\subset \R^d$, a Delaunay simplex $\sigma=\{p_0,\ldots,p_k\}$ and a $q\in\R^d$
with $d$ a constant. We write $V$ for the intersection of the Voronoi regions of $p_0,\ldots,p_k$. We set $p:=p_0$ and
want to construct the bridge of $(V,p,q)$.

First, we compute the projections $p'$ and $q'$ of $p$ and $q$ to the subspace $S$ that supports $V$. This can be done by iteratively projecting
to bisector hyperplanes $(p,p_1),\ldots,(p,p_k)$ in $O(k)=O(d)=O(1)$ time.
$V$ is the intersection of half-spaces in $S$, and they can be easily obtained
through the co-facets of $\sigma$: If $\{p_0,\ldots,p_k,p_{k+1}\}$ is a co-facet,
the corresponding half-space is bounded by the intersection of the bisector of $p$ and $p_{k+1}$ and contains $p$.

Let $f=f_\sigma$ be the number of co-facets of $\sigma$. To compute the bridge,
we parameterize the line through $p'$ and $q'$ via $tp'+(1-t)q'$
and compute  for every bounding half-space of $V$ its intersection
with the $p'q'$-line (which requires to compute the intersection of the line
with the bounding hyperplane in $S$). This yields a parameter range
$(\infty,t_0]$ or $[t_0,\infty)$, and the intersection of all these
ranges yields an interval $[t_1,t_2]$ which we intersect with $[0,1]$
to obtain the bridge. This costs $O(f)$ time.

Hence, with $N$ the number of simplices in the Delaunay triangulation,
the total cost to compute all bridges is
\begin{align*}
&\sum_{\sigma\in\mathrm{Del}(P)} O(f_\sigma) = O(\sum_{\sigma\in\mathrm{Del}(P)} \#\{\text{co-facets of }\sigma\})=O(\sum_{\sigma\in\mathrm{Del}(P)} \#\{\text{facets of }\sigma\})\\&=O(\sum_{\sigma\in\mathrm{Del}(P)} (d+1))=O(N).
\end{align*}

\subparagraph{Details on graph construction.}
Again, let $V$ be a Voronoi polytope. 
We construct a graph $G$ whose vertices are the endpoints of bridges of faces of $V$
and whose edges are induced by the bridges. 
Note that a bridge runs through the interior of a face except at its endpoints.
Since the interiors of faces are disjoint, it follows that two
bridges cannot cross. 
However, the endpoint
of a bridge $b_1$ can lie in the interior of another bridge $b_2$, so 
a bridge might contribute several edges to $G$. In this case, we say that
the $b_1$ \emph{splits} $b_2$. Every split increases the number of
edges by one. We argue that a bridge can, however, only be split 
at most $2^{d+1}$ times, which is a constant as $d$ is constant.
Indeed, the bridge of a polytope
$W'$ can only split the bridge of $W$ if $W'$ is a co-face of $W$.
But the number of co-faces of $W$ is at most $2^{d+1}$
because $W$ is a Voronoi polytope given by $k\leq d+1$
sites $\{p_1,\ldots,p_k\}$, and each co-facet is given by a subset of these sites.

The algorithm to compute $G$ is then straight-forward: 
Let $\sigma$ be the Delaunay simplex that corresponds to $V$.
For every co-face $\tau$ of $\sigma$ in the Delaunay complex,
we split its bridge by iterating over all faces of $\tau$ that
are co-faces of $\sigma$ and check whether one of the bridge endpoints
lies on the bridge of $\tau$. If so, we split that bridge accordingly.
After having finished splitting the bridge of $\tau$, we add the resulting
subdivided bridge segments into $G$. By the analysis above, with
$m$ the number of faces of $V$, constructing $G$ has a complexity of $O(m)$.

\subparagraph{Details on computing the minimizing path.}
Given the graph $G$ as above for $V$, we compute the minimizing path for
$(V,p,q)$. We first consider $G$ as directed, such that along every edge,
the distance to $q$ is decreasing. By definition and the Structure Theorem (Theorem~\ref{thm:structure_thm}), the minimizing path must be
a directed path along $G$. 

We first compute the starting point $\hat{p}$ of the path.
This point is a vertex of $G$: by the Structure Theorem, it is contained 
in some bridge, and it cannot lie in the interior of that bridge because
then, moving along the bridge towards $p$ would yield
a point in $V$ with smaller distance to $p$, contradicting the minimality
of $\hat{p}$. Hence we can find $\hat{p}$ in $O(m)$ time, with
$m$ the number of faces of $V$, by just computing
the distance to $p$ for each vertex of $G$ and taking the minimum.
The same argument shows that also $\hat{q}$ can be computed in $O(m)$ time.

To construct the minimizing path, we simply walk along edges of $G$
starting from $\hat{p}$ until we encounter $\hat{q}$. At every vertex $x$
on this path, we have to decide 
which outgoing edge defines the minimizing path.
For each outgoing edge, the squared distance to $q$ decreases, 
and the minimizing
path follows the edge where the squared distance has the steepest decrease.
This can be easily determined by the derivative: For each outgoing edge $e$,
let $\ell$ be the line supporting the edge. The squared distance function
to $q$ along $\ell$ is a quadratic function, so its derivative is a line.
We evaluate the derivative at $x$ for all outgoing edges
and follow the edge with the smallest derivative. If two or more
derivatives are equal, we choose the edge with smallest second derivative
(which is a constant). Note that no two edges can have equal first
and second derivative, 
as this would imply that the squared distance to $q$ 
remains constant, contradicting the uniqueness of the minimizing path.

The complexity of finding the next edge is then proportional
to the number of outgoing edges. Since every edge is considered at most in
one step of the walk, the complexity of computing the minimizing path
is bounded by $O(m)$.
 
Putting everything together, we need $O(N)$ time (with $N$ the number 
of Delaunay simplices) to compute the bridges and $O(m_\sigma)$
time for computing the minimizing path for $\sigma$, with $m_{\sigma}$ 
the number of co-faces of $\sigma$. Hence the total complexity is

\begin{align*}
&O(N)+\sum_{\sigma\in\mathrm{Del}(P)} O(m_\sigma) = O(N+\sum_{\sigma\in\mathrm{Del}(P)} \#\{\text{co-faces of }\sigma\})\\ &=O(N+\sum_{\sigma\in\mathrm{Del}(P)} \#\{\text{faces of }\sigma\})=O(N+ N\cdot 2^{d+1})=O(N).
\end{align*}

\subparagraph{$\Gamma_s$ for unbounded polygons.}
If $V$ is unbounded, $q^\ast$, the furthest point from $q$ in $V$, does not exist. Instead, it is simple to see
that $\|q-\Gamma_s\|\to\infty$ for $s\to\infty$. Hence, unlike in the absolute case and unlike in the case of bounded polygons,
the entry curve does not become constant eventually, but contains a non-constant arc towards $\infty$ in $s$-direction.

Nevertheless, the curve $\Gamma$ remains a continuous injective curve also in this case, and the Structure
Theorem (Theorem~\ref{thm:structure_max})
holds also in this case with identical proof, except that the polygonal chain contains one ray.
This ray is either the anti-bridge, if an infinite part of it lies in $V$, or it is a part of an unbounded
boundary edge of $V$. The parameterization of the ray is then given as in 
(\ref{eqn:param1}) from Section~\ref{sec:absolute}, with $a$
the starting point of the ray, $b$ any further point of the ray, and $t$ ranging in $[0,\infty)$.
This still yields a ray or (unbounded) parabolic arc.

\section{Parameterized curves}
\label{app:algebra}
The goal of this section is to prove:

\begin{theorem}
Let $p,q,a,b$ be points in $\R^d$. Then the parameterized curve
\begin{align}
s&=\|p-((1-t)a+tb)\|^2\\
r&=\|q-((1-t)a+tb)\|^2
\end{align}
with $t\in [0,1]$ is the arc of a parabola or of a line.
\end{theorem}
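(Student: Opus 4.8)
The plan is to expand both coordinate functions as polynomials in the path parameter and to exploit one structural coincidence between them. Writing $\ell(t) := (1-t)a + tb = a + t(b-a)$, a one-line computation gives
\[
s(t) = \|b-a\|^2\, t^2 - 2\langle p-a,\, b-a\rangle\, t + \|p-a\|^2, \qquad
r(t) = \|b-a\|^2\, t^2 - 2\langle q-a,\, b-a\rangle\, t + \|q-a\|^2 .
\]
The point I would emphasize is that these are quadratics in $t$ sharing the \emph{same} leading coefficient $c := \|b-a\|^2$. Setting $d_s := -2\langle p-a,b-a\rangle$ and $d_r := -2\langle q-a,b-a\rangle$, the curve can be written as the image of the standard parabola arc $\{(t,t^2) : t\in[0,1]\}$ under the affine map
\[
\begin{pmatrix} t \\ t^2 \end{pmatrix} \longmapsto M\begin{pmatrix} t \\ t^2 \end{pmatrix} + \begin{pmatrix} \|p-a\|^2 \\ \|q-a\|^2 \end{pmatrix}, \qquad M := \begin{pmatrix} d_s & c \\ d_r & c \end{pmatrix}.
\]

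From here I would split on whether this affine map is invertible, i.e.\ on whether $\det M = c\,(d_s-d_r) = 2\,\|b-a\|^2\,\langle q-p,\,b-a\rangle$ vanishes. In the non-degenerate case $\det M \neq 0$, the curve is the image of a parabola under an invertible affine transformation; since being a parabola is an affine notion (an invertible affine map is a translation after an invertible linear map, and one checks directly, by eliminating $t$, that a linear image of $\{(t,t^2)\}$ is again a parabola), the curve is an arc of a parabola, and restricting to $t\in[0,1]$ merely selects a connected sub-arc.

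In the degenerate case $\det M = 0$ I would argue as follows. If $c = 0$ then $a=b$, both $s(t)$ and $r(t)$ are constant, and the curve is a single point, which trivially lies on a line. Otherwise $c\neq 0$ forces $d_s = d_r$, hence $w(t) := s(t)-r(t) = \|p-a\|^2 - \|q-a\|^2$ is constant; the whole curve then lies on the line $\{s - r = \|p-a\|^2-\|q-a\|^2\}$, and since the continuous function $s(t)$ sweeps an interval as $t$ runs over $[0,1]$, the curve is a line segment. This exhausts the cases.

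I do not expect a genuine obstacle here: the computation is entirely routine, and the only idea is the shared-leading-coefficient observation, which is precisely what collapses the a priori two-parameter family of quadratic parameterizations onto conics with a degenerate (``vertical'') axis. The mildest subtlety is to state carefully, in the non-degenerate case, that ``parabola'' is meant affinely rather than metrically -- but that is standard and needs no real work.
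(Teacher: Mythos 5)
Your proof is correct and follows essentially the same route as the paper: both write the curve as a quadratic vector polynomial in $t$ and split on whether the quadratic and linear coefficient vectors are linearly independent (your condition $\det M \neq 0$ is exactly this), invoking in the independent case that an invertible affine map sends a parabola to a parabola and in the dependent case that the image lies on a line. The shared-leading-coefficient observation is a nice extra structural remark, and your handling of the degenerate subcases is slightly more explicit than the paper's, but the argument is the same.
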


\begin{proof}
Rewriting the parameterization in vector form, we obtain
\[
\left(\begin{array}{c}s\\r\end{array}\right)=u t^2 + v t + w
\]
with $u,v,w\in\R^2$. If $u$ and $v$ are lineary dependent, we obtain
the equation of a line, as one can easily check. Otherwise,
there is an affine transformation that maps $(u,v)$ to $(e_1,e_2)$,
and after a suitable translation, the parameterization is
\[
\left(\begin{array}{c}s\\r\end{array}\right)=e_1 t^2 + e_2 t=\left(\begin{array}{c}t^2\\t\end{array}\right)
\]
which is a parabola. The proof is completed by noting that affine transformations map parabolas to parabolas~\cite[Thm 12.12]{bls-euclidean}.
\end{proof}

\section{Homology}
\label{app:Homology}

\subparagraph{Relative homology.}
Given a topological space $X$ and a subspace $A\subseteq X$, the induced boundary maps on quotient chain groups $\partial_n: C_n(X)/C_n(A)\rightarrow C_{n-1}(X)/C_{n-1}(A)$ yield a chain complex with \emph{relative homology groups} $H_n(X,A):=\mathrm{ker}(\partial_n) / \mathrm{im}(\partial_{n+1})$. Thus, while considering relative homology, one "ignores" the subspace $A$ in some sense. 

There is a connection between relative and absolute homology which greatly increases one's intuition: If $A$ is a nonempty closed subspace of $X$ that is a deformation retract of some neighborhood in $X$, then the pair of spaces $(X,A)$ is called a \emph{good pair}. For good pairs it holds that $H_n(X,A)\cong \tilde{H}_n(X/A)$. (see e.g. \cite{hatcher} for details). 

\subparagraph{Local homology.}
Given a point $q\in X$, the $n$-dimensional \emph{local homology group} of $X$ at $q$ is defined to be the relative homology group $H_n(X,X \setminus q)$. These groups study the local topological structure of $X$ near $q$. Assuming that single points are closed in $X$, excision yields $H_n(X,X\setminus q) \cong H_n(U,U\setminus q)$ for any open neighborhood $U$ of $q$ \cite{hatcher}. Further, notice that if $X$ is a $m$-dimensional manifold then the local homology groups are trivial for $n\neq m$ \cite{dold}. See \cite{dfw-dimdet} for an algorithm to detect dimensions of manifolds sampled by point clouds. 

\subparagraph{Persistent local homology modules.}
%.
To account for the influence of noise, a persistent version of local homology was introduced in \cite{bceh-lhfromstratified}. This concept was adapted by Skraba and Wang \cite{sw-approx} where two types of persistence modules got studied. Before giving the definitions, remember some notation: As before, $P\subset\mathbb{R}^d$ is a finite set of points, $L_s=\bigcup_{p\in P}B_s(p)$ and $q$ is the center. For $s<s'$, $r>r'$ and all maps induced by inclusion, a \emph{persistent local homology module} of $r$-type is given by the diagram 
\begin{equation}
	\label{eq:PLH_type_r}
	\cdots \rightarrow H_n(L_s,L_s\setminus B^o_r(q))\rightarrow H_n(L_s,L_s\setminus B^o_{r'}(q)) \rightarrow \cdots,
\end{equation}
one of $s$-type by
\begin{equation}
	\label{eq:PLH_type_s}
	\cdots \rightarrow H_n(L_s\cap B_r(q),L_s\cap \partial B_r(q))\rightarrow H_n(L_{s'}\cap B_r(q),L_{s'} \cap \partial B_r(q) ).
\end{equation}

Two scenarios are studied here: in case of $r$-type modules one fixes the offsets $L_s$ and varies the scope of locality, whereas for $s$-type modules the shape gets thickened inside a fixed radius ball. 

A connection to classical local homology groups is given by the following direct limits~\cite{b-phd}. $
H_n(L_s,L_s\setminus q)=\lim_{r\rightarrow 0}H_n(L_s,L_s\setminus B^o_r(q))=\lim_{r\rightarrow 0}H_n(L_s\cap B_r(q),L_s\cap \partial B_r(q)).
$

\end{appendix}

\end{document}